\newcommand{\xkh}[1]{\left(#1\right)}
\newcommand{\dkh}[1]{\left\{#1\right\}}
\newcommand{\zkh}[1]{\left[#1\right]}
\newcommand{\nj}[1]{\langle {#1} \rangle}
\newcommand{\norm}[1]{\|{#1}\|_2}
\newcommand{\normone}[1]{\|{#1}\|_1}
\newcommand{\norms}[1]{\|{#1}\|}
\newcommand{\abs}[1]{\lvert#1\rvert}
\newcommand{\1}{{\mathds 1}}
\newcommand{\R}{{\mathbb R}}
\newcommand{\T}{\top}
\newcommand{\z}{{\widehat{\bm{z}}}}
\newcommand{\hu}{{\widehat{\bm{u}}}}
\newcommand{\hv}{{\widehat{\bm{v}}}}
\newcommand{\vx}{{\bm x}}
\newcommand{\vxs}{{{\bm x}}}
\newcommand{\vy}{{\bm y}}
\newcommand{\vu}{{\bm u}}
\newcommand{\vv}{{\bm v}}
\newcommand{\vw}{{\bm w}}
\newcommand{\vz}{{\bm z}}
\newcommand{\hS}{\widehat{ S }}
\newcommand{\vh}{{\bm h}}
\newcommand{\supp}{{\rm supp}}
\newcommand{\RNum}[1]{\uppercase\expandafter{\romannumeral #1\relax}}
\newtheorem{definition}{Definition}[section]
\newtheorem{prop}[definition]{Proposition}
\newtheorem{theorem}[definition]{Theorem}
\newtheorem{lemma}[definition]{Lemma}
\newtheorem{remark}[definition]{Remark}
\newtheorem{example}[definition]{Example}
\date{}
\begin{document}

\author{Meng Huang}
\address{School of Mathematical Sciences, Beihang University, Beijing, 100191, China} \email{menghuang@buaa.edu.cn}
\thanks{Meng Huang was supported by NSFC grant (12201022).}

\author{Shidong Li}
\address{Department of Mathematics, San Francisco State University, USA} \email{shidong@sfsu.edu}

\baselineskip 18pt
\bibliographystyle{plain}
\title[Recovery bounds for tail-minimization]{Performance analysis of tail-minimization and the linear rate of convergence of a proximal algorithm for sparse signal recovery}
\maketitle

\begin{abstract}
Recovery error bounds of tail-minimization and the rate of convergence of an efficient proximal alternating algorithm for sparse signal recovery are considered in this article. Tail-minimization focuses on minimizing the energy in the complement $T^c$ of an estimated support $T$. Under the restricted isometry property (RIP) condition, we prove that tail-$\ell_1$ minimization can exactly recover sparse signals in the noiseless case for a given $T$. In the noisy case, two recovery results for the tail-$\ell_1$ minimization and the tail-lasso models are established. Error bounds are improved over existing results. Additionally, we show that the RIP condition becomes surprisingly relaxed, allowing the RIP constant to approach $1$ as the estimation $T$ closely approximates the true support $S$.  Finally, an efficient proximal alternating minimization algorithm is introduced for solving the tail-lasso problem using Hadamard product parametrization. The linear rate of convergence  is established using the Kurdyka-{\L}ojasiewicz inequality.  Numerical results demonstrate that the proposed algorithm significantly improves signal recovery performance compared to state-of-the-art techniques.
\end{abstract}

\section{Introduction}
Let $\vx\in \R^n$ be a  sparse or nearly sparse vector. The problem of recovering $\vx$ from underdetermined system
\begin{equation} \label{eq:prob}
\vy=A\vxs+\vw,
\end{equation}
is termed as {\em compressed sensing} \cite{candes2006,donoho,foucar,rauhut}. Here, $A\in \R^{m\times n} (m<<n)$ is the measurement matrix, $\vy \in \R^m$ is the observed measurements, and $\vw \in \R^m$ is the noise vector. $\vx \in \R^n$ is said $s$-sparse if $\|\vx\|_0\leq s$ where $\|\vx\|_0$ denotes the number of nonzero entries of $\vx$.
Compressed sensing is ubiquitous in many areas of physical sciences and engineering, such as magnetic resonance imaging (MRI) \cite{Lustig,Lustig2}, computed tomography (CT) \cite{Sloun},  radar \cite{Ender,Herman}, statistics \cite{RIPcandes2,Bickel}, and others \cite{Duarte,Recht} to name just a few. We refer readers to a recent book \cite{foucar} for a comprehensive exposition of the subject.

The task of reconstructing $\vx$ can be formulated as the following $\ell_0$-minimization problem
\begin{equation} \label{eq:ell0}
  \min \limits_{\vz \in \R^n}  \norms{\vz}_0 \qquad \mbox{s.t.} \quad \norm{A\vz-\vy}\le \epsilon,
\end{equation}
where $\epsilon>0$ is the noise level. The $\ell_0$-minimization problem is known to be NP-hard \cite{natarajan} and un-tractable. A natural convex relaxation to (\ref{eq:ell0}) is the $\ell_1$-minimization problem,
\begin{equation}  \label{eq:ell1}
   \mathop{\min} \limits_{\vz \in \R^n}  \norms{\vz}_1 \qquad  \mbox{s.t.} \quad \norm{A\vz-\vy}\le \epsilon,
\end{equation}
which has effective solutions \cite{candes2006,donoho,chen2001}, etc.  Based on the restricted isometry property (RIP), or the null space property (NSP) of the matrix $A$, the equivalence between \eqref{eq:ell0} and \eqref{eq:ell1} has been established, e.g.,  \cite{RIPTcaiz,CJT,donoho2}.  Another approach is to solve the regularized  unconstrained program:
\begin{equation}
 \mathop{\min} \limits_{\vz \in \R^n}  \frac12 \norm{A\vz-\vy} +\lambda  \norms{\vz}_1,
\end{equation}
which is widely known as the LASSO problem (the least absolute shrinkage and selection operator) \cite{Tibshirani}, and can be resolved through iterative shrinkage algorithm \cite{Beck}.  Other algorithms include $\ell_q$-minimization ($0<q<1$), e.g., \cite{CY,XCX}, greedy  algorithms such as  OMP \cite{zhang2011sparse}, OMMP \cite{OMMP},  SP \cite{Dai}, CoSaMP \cite{needell2009cosamp}, iterative hard thresholding (IHT) \cite{blumensath2009iterative},  and hard thresholding pursuit (HTP) \cite{foucart2011hard}, and so on.

In this article, the following tail-$\ell_1$ minimization problem is considered to recover  the signal $\vx$:
\begin{equation} \label{eq:mod1}
\min_{\vz\in \R^n} \quad \normone{\vz_{T^c}}  \qquad \mbox{s.t.} \quad \norm{A\vz-\vy}\le \epsilon.
\end{equation}
Here, $T \subset [n]$ is an estimated index set of the true support $S$,  $T^c$ is the complement of $T$, and $\epsilon$ is the noise level.  The idea of tail minimization is to squeeze the energy on the complement $T^c$ to the estimated support set $T$, resulting in a sparser solution.   For the case where  the noise level $\epsilon$ is unknown,  we  consider the following  tail-lasso problem:
\begin{equation} \label{eq:mod2}
\min_{\vz \in \R^n} \qquad  \frac12\norm{A\vz-\vy}^2+ \lambda\normone{\vz_{T^c}},
\end{equation}
which is known to be equivalent to \eqref{eq:mod1} with  suitable parameter $\lambda>0$.

The effectiveness of tail-$\ell_1$ minimization was validated in \cite{lai2018spark}, where the authors demonstrated that it can recover signals with nearly spark-level sparsity. To provide theoretical guarantees, \cite{zheng2022mmv} established recovery error bounds for \eqref{eq:mod1} under the assumption of the tail Null Space Property (tail-NSP).  While the restricted isometry property (RIP) is another widely used assumption in compressed sensing analysis, the performance of tail-$\ell_1$ minimization under RIP condition remains unexplored.
In terms of algorithms,  since the programs \eqref{eq:mod1} and \eqref{eq:mod2} are convex, a wide range of algorithms are available for estimating their solutions ( see \cite{schmidt} for a review). Recently, the tail Hadamard Product Parametrization (tail-HPP) algorithm, initially proposed in \cite{hoff2017} for solving classical LASSO problems, has been investigated for tail-$\ell_1$ minimization in \cite{Guangxi}, which demonstrated superb speed of signal recovery among all solution techniques of the tail-$\ell_1$ minimization problem. However, theoretical guarantees for the convergence of the tail-HPP approach, especially concerning the rate of convergence for an estimated $T$, are still unclear.  In summary,
\begin{itemize}
\item[(1)]  Besides the (tail) NSP condition, the RIP condition also offers different and inside perspectives of sparse signal recovery. So far, the RIP analysis for tail minimization has not yet been explored. We aim at investigating the theoretical performance of  \eqref{eq:mod1} and \eqref{eq:mod2} under the widely used RIP condition in this article.  
\item[(2)]  Although the tail-HPP method has better recovery performance compared to state-of-the-art approaches, there is yet no convergence analysis. In this article, we aim to provide theoretical guarantees for an improved version of the tail-HPP algorithm and establish its linear rate of convergence for the problem \eqref{eq:mod2}.
\end{itemize}

Main contributions of this article are as follows:
\begin{itemize}
\item For a sparsity level $k$ and an estimated support $T\subset [n]$, we show that any solution $\z$ to \eqref{eq:mod1} with $\vy=A\vxs+\vw$ and $\norm{\vw} \le \epsilon$ obeys
\begin{equation*}
\norm{\z-\vxs} \le C_1 \epsilon+ C_2 \frac{\normone{\vxs_{T^c \cap S^c}} }{\sqrt{ak}},
\end{equation*}
provided the matrix $A$ satisfies a RIP condition. Here, $S$ is the index set of the $k$-largest  components of $\vxs$ in magnitude, and $C_1, C_2 >0$ are some constants.  We shall mention that the error bounds based on tail-NSP given in \cite{zheng2022mmv} are so far for individual recovery, while our results based on RIP condition holds for all signals $\vx\in \R^n$ of the order-$k$  sparsity.
\item When the matrix $A$ satisfies RIP condition with $0<\delta_{2k}<1$ and $T$ is a good estimate of the true support of  $\vx$, we show that the solution $\z$ to program \eqref{eq:mod1} is a reliable estimate of $\vx$.  Compared to existing results, such as $\delta_{2k}<\sqrt2-1$ in \cite{RIPcandes}, $\delta_{2k}<0.472$ in \cite{RIPTcai}, $\delta_{2k}<0.4931$ in \cite{Mo}, $\delta_{2k}<0.5$ in \cite{RIPTcai9}, and $\delta_{ak}<\sqrt{(a-1)/a}$ for $a\ge 4/3$ \cite{RIPTcaiz},   our result only requires  that any $2k$ columns of $A$ are linear independent, which is a significant improvement and aligns with the necessary condition for sparse recovery.
\item For any estimated support $T\subset [n]$ with $|T|=k_T <  m-k$,  any solution $\z$ to \eqref{eq:mod2} with parameter $\lambda > 2\norms{A^\T \vw}_\infty$  satisfies
\[
\norm{\z-\vx} \le C_1 \lambda \sqrt{k+k_T} + C_2 \sqrt{ \lambda \normone{\vxs_{T^c \cap S^c}} }+C_3 \frac{\normone{\vxs_{T^c \cap S^c}}}{\sqrt{ak}}
\]
for some positive constants $C_1, C_2$ and $C_3$, provided the matrix $A\in \R^{m\times n}$ satisfies a RIP condition. 
\item Finally, an improved version of the tail-HPP algorithm for solving the tail minimization problem \eqref{eq:mod2} is proposed.  A convergence analysis is provided, and its linear rate of convergence to the target solution is also established. Numerical experiments demonstrate that it achieves much higher successful rate compared to state-of-the-art techniques.
\end{itemize}

The rest of the paper is organized as follows: In Section 2, we present the performance error bounds for the tail-$\ell_1$ minimization \eqref{eq:mod1} and the tail-lasso problem \eqref{eq:mod2}. Comparisons with existing results are also discussed. In Section 3, we  propose a proximal alternating minimization algorithm for solving the tail minimization problem \eqref{eq:mod2} based on Hadamard product parametrization, and establish its linear rate of convergence using tools from the Kurdyka-{\L}ojasiewicz (KL) theory. Numerical experiments validating the effectiveness of the proposed  algorithm are provided in Section 4. A conclusive discussion is presented in Section 5.

The notations used in this paper are summarized as follows:  $T$ represents the estimated support, $S$ indexes the $k$-largest components of $\vxs$ in magnitude,  and $S_0$ indexes the $k$-largest components of $\vxs_{T^c}$ in magnitude.
 For an index set $T \subset [n]$,  $T^c$ denotes the complement of $T$,  and $A_T \in \R^{m\times n}$ denotes the submatrix consisting of the columns of matrix  $A \in \R^{m\times n}$ indexed by $T$.

\section{Performance Error Bounds}
In this section, we establish the error bounds for \eqref{eq:mod1} and \eqref{eq:mod2}. Compared to the tail-NSP condition for $T$ relative to $S$ used in \cite{zheng2022mmv}, we adopt the restricted isometry property (RIP) condition \cite{RIPcandes} for an order-$k$ sparsity recovery analysis.

\begin{definition}
The  matrix $A\in \R^{m\times n}$ satisfies order-$k$ Restricted Isometry Property (RIP) condition with constants $0<\delta_{k,l}<1$ and
$\delta_{k,u}>0$  if
\[
(1-\delta_{k,l}) \norm{\vz}^2 \le \norm{A\vz}^2 \le (1+\delta_{k,u}) \norm{\vz}^2
\]
holds simultaneous for all $k$-sparse signals $\vz\in \R^n$. We shall denote $\delta_k=\max\dkh{\delta_{k,l}, \delta_{k,u}}$.
\end{definition}

The next lemma is a direct consequence of the definition of RIP, which is given in \cite[Lemma 2.1]{RIPcandes}.

\begin{lemma}  \label{le:RIPcandes}
For any disjoint subset $S, T \subset \dkh{1,\ldots, n}$ with $|S| \le s$ and $|T|\le k$, if the matrix $A\in \R^{m\times n}$ obeys order-$(s+k)$ restricted isometry property with constants $0<\delta_{s+k,l}<1$ and  $\delta_{s+k,u}>0$, then
\[
\abs{\nj{A\vx,A\vx'}} \le \delta_{s+k} \norm{\vx}\norm{\vx'}
\]
holds for all $\vx, \vx' \in \R^n$ supported on $S$ and $T$, respectively. Here, $\delta_{s+k}=\max\dkh{\delta_{s+k,l}, \delta_{s+k,u}}$.
\end{lemma}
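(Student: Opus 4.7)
The plan is to follow the standard polarization argument that underlies Candès's original near-orthogonality lemma. First I would reduce to unit-norm vectors: by homogeneity of both sides, it suffices to show the inequality when $\|\vx\|=\|\vx'\|=1$, so I would assume this from the outset.

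Next I would exploit the disjointness of the supports $S$ and $T$. Because $\supp(\vx)\subset S$, $\supp(\vx')\subset T$, and $S\cap T=\emptyset$, the vectors $\vx+\vx'$ and $\vx-\vx'$ are both supported on $S\cup T$, which has cardinality at most $s+k$. Moreover, the disjoint support gives the Pythagorean identity
\[
\norm{\vx+\vx'}^2 = \norm{\vx-\vx'}^2 = \norm{\vx}^2+\norm{\vx'}^2 = 2.
\]
Thus both $\vx\pm \vx'$ are legitimate inputs to the order-$(s+k)$ RIP bound.

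The core of the proof is then the parallelogram identity
\[
4\langle A\vx, A\vx'\rangle = \norm{A(\vx+\vx')}^2 - \norm{A(\vx-\vx')}^2.
\]
Applying the upper RIP bound to the first term and the lower RIP bound to the second yields
\[
4\langle A\vx, A\vx'\rangle \le (1+\delta_{s+k,u})\cdot 2 - (1-\delta_{s+k,l})\cdot 2 = 2(\delta_{s+k,l}+\delta_{s+k,u}) \le 4\delta_{s+k},
\]
and symmetrically, swapping the roles of the two terms, $-4\langle A\vx, A\vx'\rangle \le 4\delta_{s+k}$. Dividing by $4$ gives $|\langle A\vx, A\vx'\rangle|\le \delta_{s+k}$, which rescales back to the claimed bound $\delta_{s+k}\norm{\vx}\norm{\vx'}$ in the non-normalized setting.

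I do not foresee a real obstacle here: the only subtle points are (i) remembering that the two-sided RIP (with possibly asymmetric $\delta_{s+k,l}$ and $\delta_{s+k,u}$) is needed to control the two terms after polarization, and (ii) using $\max\{\delta_{s+k,l},\delta_{s+k,u}\}\ge \tfrac{1}{2}(\delta_{s+k,l}+\delta_{s+k,u})$ to absorb the two RIP constants into the single quantity $\delta_{s+k}$ stated in the lemma.
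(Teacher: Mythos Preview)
Your argument is correct and is precisely the standard polarization proof due to Cand\`es. The paper does not supply its own proof of this lemma; it simply cites \cite[Lemma~2.1]{RIPcandes}, and your write-up reproduces that argument essentially verbatim, with the minor cosmetic adaptation of tracking the asymmetric constants $\delta_{s+k,l}$ and $\delta_{s+k,u}$ before bounding their average by $\delta_{s+k}=\max\{\delta_{s+k,l},\delta_{s+k,u}\}$.
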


\subsection{Performance error bounds for constrained minimization}

We first provide an error bound for the tail-$\ell_1$ minimization \eqref{eq:mod1}, which implies that exact recovery can be achieved even when $T$ is chosen arbitrarily with $|T|< m-k$, given a mild RIP condition.
Throughout this paper, by order-$K$ RIP condition, we mean order-$\lceil  K \rceil $ when $K$ is a non-integer.

\begin{theorem} \label{th:1}
Suppose the estimated support $T \subset [n]$ with cardinality  $|T|=k_T$.  For any integer $k\ge 1$ and  any $a>0$, let $K=\max\dkh{(a+1)k+k_T, 2ak}$. Assume that  the matrix $A\in \R^{m\times n}$ satisfies order-$K$ RIP condition with constants $0<\delta_{K,l}<1$ and $\delta_{K,u}>0$. If
\begin{equation} \label{eq:raK}
a> \frac{2\delta_K^2}{(1-\delta_{K,l})^2},
\end{equation}
then the following holds: for all $\vxs \in \R^n$, the solution $\z$ to \eqref{eq:mod1} with $\vy=A\vxs+\vw$ and $\norm{\vw} \le \epsilon$ obeys
\begin{equation} \label{eq:Th12}
\norm{\z-\vxs} \le C_1 \epsilon+ C_2 \frac{\normone{\vxs_{T^c \cap S_0^c}} }{\sqrt{ak}}.
\end{equation}
Here, $S_0$ index the $k$-largest components of $\vxs_{T^c}$ in magnitude, $\delta_K=\max\dkh{\delta_{K,l},\delta_{K,u}}$  and
\[
C_1:= \frac{2(\sqrt{a}+1)\sqrt{1+\delta_K}}{\sqrt{a}(1-\delta_{K,l})-\sqrt2\delta_K}, \quad C_2:=\xkh{\frac{2\sqrt2 (\sqrt{a}+1)\delta_K}{\sqrt{a}(1-\delta_{K,l})-\sqrt2\delta_K} +2} .
\]
\end{theorem}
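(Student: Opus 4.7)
The plan is to adapt the standard RIP recovery template to the tail-$\ell_1$ setting. Let $\vh := \z - \vxs$. Since $\z$ and $\vxs$ are both feasible for \eqref{eq:mod1}, the tube bound $\norm{A\vh} \le 2\epsilon$ follows immediately. The tail optimality $\normone{\z_{T^c}} \le \normone{\vxs_{T^c}}$, combined with the triangle inequality applied on $S_0 \subset T^c$ and its complement within $T^c$, yields the cone-type condition
\[
\normone{\vh_{T^c \cap S_0^c}} \le \normone{\vh_{S_0}} + 2\normone{\vxs_{T^c \cap S_0^c}}.
\]

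Next, set $U := T \cup S_0$ (so $|U| \le k_T + k$) and decompose $\vh_{U^c} = \sum_{j \ge 1} \vh_{T_j}$ into consecutive blocks of size $ak$ sorted by magnitude. The usual pigeon-hole estimate yields $\|\vh_{T_j}\|_2 \le \|\vh_{T_{j-1}}\|_1/\sqrt{ak}$ for $j \ge 2$, hence $\sum_{j \ge 2}\|\vh_{T_j}\|_2 \le \normone{\vh_{U^c}}/\sqrt{ak}$. I take the head to be $W := U \cup T_1$, so $|W| \le (a+1)k + k_T \le K$; folding $T_1$ into $W$ (rather than leaving it in the tail) is precisely what will deliver the factor $\sqrt 2$ in the final constants, through the Pythagorean identity $\|\vh_U\|_2 + \|\vh_{T_1}\|_2 \le \sqrt 2\,\|\vh_W\|_2$.

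The core estimate begins from $(1-\delta_{K,l})\|\vh_W\|_2^2 \le \|A\vh_W\|_2^2 = \langle A\vh_W, A\vh\rangle - \sum_{j\ge 2}\langle A\vh_W, A\vh_{T_j}\rangle$; Cauchy--Schwarz and the tube bound handle the first term, contributing $2\epsilon\sqrt{1+\delta_K}\,\|\vh_W\|_2$. The crucial move --- and the reason $K$ only needs to be $\max\{(a+1)k + k_T,\,2ak\}$ rather than the naive sum $(2a+1)k + k_T$ --- is to write each cross term as $\langle A\vh_W, A\vh_{T_j}\rangle = \langle A\vh_U, A\vh_{T_j}\rangle + \langle A\vh_{T_1}, A\vh_{T_j}\rangle$ and invoke Lemma~\ref{le:RIPcandes} on each piece separately: the first involves disjoint supports of total size $(a+1)k + k_T$, the second of total size $2ak$, and both are bounded by $\delta_K$. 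Pooling by the $\sqrt 2$ inequality and then substituting $\normone{\vh_{S_0}} \le \sqrt k\,\|\vh_W\|_2$ into the cone condition produces a linear inequality for $\|\vh_W\|_2$ whose leading coefficient is $(\sqrt a\,(1-\delta_{K,l}) - \sqrt 2\,\delta_K)/\sqrt a$, which is positive precisely under~\eqref{eq:raK}.

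Finally I pass from $\vh_W$ back to $\vh$ via $\|\vh\|_2 \le \|\vh_W\|_2 + \sum_{j\ge 2}\|\vh_{T_j}\|_2$ and rewrite $\sum_{j\ge 2}\|\vh_{T_j}\|_2 \le \normone{\vh_{U^c}}/\sqrt{ak} \le \|\vh_W\|_2/\sqrt a + 2\normone{\vxs_{T^c \cap S_0^c}}/\sqrt{ak}$ using the cone condition once more. Solving the linear inequality for $\|\vh_W\|_2$ and combining produces the $(\sqrt a + 1)/\sqrt a$ pre-factor appearing in both $C_1$ and $C_2$, while the trailing $+2$ in $C_2$ is exactly the residual cone overhead from this final recombination. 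The main obstacle is recognising the two-way split of the cross inner products in the core estimate; once that is in place, the remainder is a careful accounting of constants.
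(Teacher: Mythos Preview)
Your proposal is correct and follows essentially the same route as the paper: the paper's $S_j$ are your $T_j$, its head set $T\cup S_0\cup S_1$ is your $W$, and it too splits the cross term $\langle A\vh_W, A\vh_{S_j}\rangle$ into the $T\cup S_0$ and $S_1$ pieces before applying Lemma~\ref{le:RIPcandes}, obtaining the factor $\sqrt2$ exactly as you describe. The cone condition, pigeon-hole tail estimate, and final recombination via $\|\vh_{S_0}\|_2 \le \|\vh_W\|_2$ are also carried out identically.
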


\begin{remark}
 Let $S$ be the index set of the $k$-largest components of $\vxs$ in magnitude. From the definition of $S_0$,  it is easy to see from Figure \ref{figure:1} that  $\normone{\vxs_{T^c \cap S_0^c}} \le \normone{\vxs_{T^c \cap S^c}}$.  Therefore, the solution $\z$ to \eqref{eq:mod1} also satisfies
 \begin{equation*}
\norm{\z-\vxs} \le C_1 \epsilon+ C_2 \frac{\normone{\vxs_{T^c \cap S^c}}}{\sqrt{ak}}.
\end{equation*}
 Furthermore, if $\vxs$ is a $k$-sparse signal, then it can be exactly recovered via tail-$\ell_1$ minimization \eqref{eq:mod1} in the absence of noise.
\end{remark}

\begin{figure}[H]
\centering
     \includegraphics[width=0.5\textwidth]{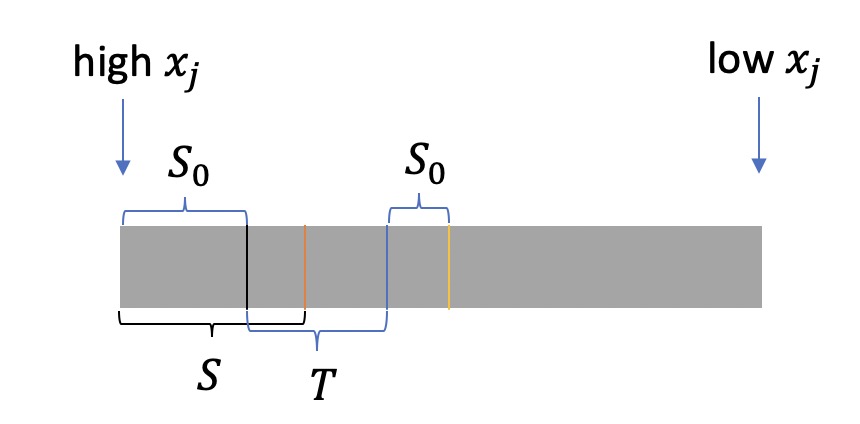}
\caption{The diagram of sets $S, T$, and $S_0$, where  $x_j$ are arranged in non-increasing order.}
\label{figure:1}
\end{figure}

\begin{remark}
The authors in \cite{zheng2022mmv} propose a  tail-NSP  condition and prove that if the matrix $A$ satisfies the tail-NSP for $T$ relative to $S \subset [n]$, then the solution $\z$ to \eqref{eq:mod1} obeys
\begin{equation} \label{eq:NSP}
\norm{\z-\vxs} \le C'_1 \epsilon+ C'_2 \norm{\vxs_{T^c \cap S^c}},
\end{equation}
where $C'_1$ and $ C'_2$ are positive constants  that depend on the matrix $A$.  In comparison, our result \eqref{eq:Th12} not only provides a slight improvement over \eqref{eq:NSP} but also applies universally to all $k$-sparse $\vx \in \R^n$, rather than for individual recovery as in \eqref{eq:NSP}.

\end{remark}

\begin{proof}[Proof of Theorem \ref{th:1}]
Set $\vh=\z-\vxs$.  For any $a>0$, we decompose $\vh$ into a sum of vectors $\vh_T, \vh_{S_0}, \vh_{S_1}, \vh_{S_2},\ldots$, where $S_0$ indexes  the $k$-largest  components of $\vxs_{T^c}$ in magnitude, $S_1$ corresponds to the locations of the $ak$-largest  of $\vh_{T^c \cap S_0^c}$ in magnitude, $S_2$ corresponds to the locations of the next $ak$-largest of $\vh_{T^c \cap S_0^c}$ in magnitude, and so on.  Observe that
\begin{equation} \label{eq:conp1}
\norm{\vh} \le  \norm{ \vh_{T\cup S_0\cup S_1}} +\sum_{j\ge 2} \norm{\vh_{S_j}}.
\end{equation}
We claim that the following holds:
\begin{equation} \label{eq:conp2}
\sum_{j\ge 2} \norm{\vh_{S_j}} \le \frac1{\sqrt a} \norm{ \vh_{S_0}} +  \frac{2\normone{\vxs_{T^c \cap S_0^c}} }{\sqrt{ak}}
\end{equation}
and
\begin{equation} \label{eq:conp3}
 \norm{ \vh_{T\cup S_0\cup S_1}} \le   \frac{2}{\sqrt{a}(1-\delta_{K,l})-\sqrt2\delta_K} \xkh{\sqrt{a(1+\delta_K)} \epsilon+ \frac{\sqrt2 \delta_K  \normone{\vxs_{T^c \cap S_0^c}} }{ \sqrt{k}}}.
\end{equation}
Putting \eqref{eq:conp2} and \eqref{eq:conp3} into \eqref{eq:conp1} and noting that $\norm{ \vh_{S_0}} \le  \norm{ \vh_{T\cup S_0\cup S_1}}$, we obtain that
\[
\norm{\vh} \le \frac{2(\sqrt{a}+1)\sqrt{1+\delta_K}}{\sqrt{a}(1-\delta_{K,l})-\sqrt2\delta_K}  \epsilon+ \xkh{\frac{2\sqrt2 (\sqrt{a}+1)\delta_K}{\sqrt{a}(1-\delta_{K,l})-\sqrt2\delta_K} +2}  \frac{\normone{\vxs_{T^c \cap S_0^c}} }{\sqrt{ak}}.
\]

It remains to prove the claims  \eqref{eq:conp2} and \eqref{eq:conp3}.  Since $\z$ is the solution to tail-$\ell_1$ minimization \eqref{eq:mod1} and $\vxs$ is a feasible point, we have
\begin{eqnarray*}
\normone{\vxs_{T^c}} \ge \normone{\z_{T^c}}& =&  \normone{\vh_{T^c}+\vxs_{T^c}} \\
&\ge & \normone{\vh_{T^c \cap S_0^c}+\vxs_{T^c \cap S_0} } - \normone{\vh_{T^c \cap S_0}+\vxs_{T^c \cap S_0^c} } \\
&=& \normone{ \vh_{T^c \cap S_0^c}}+ \normone{\vxs_{S_0}} - \normone{ \vh_{S_0}}- \normone{\vxs_{T^c \cap S_0^c}},
\end{eqnarray*}
where the last inequality comes from the fact that $S_0 \subset T^c$.
Combining with the decomposition that $\normone{\vxs_{T^c}}=\normone{\vxs_{S_0}} +  \normone{\vxs_{T^c \cap S_0^c}}$, one has
\begin{equation} \label{eq:hsc}
\normone{ \vh_{T^c \cap S_0^c}} \le \normone{ \vh_{S_0}}+ 2\normone{\vxs_{T^c \cap S_0^c}}.
\end{equation}
From the definition of $S_j$, it holds
\begin{equation} \label{eq:hsjsum}
\sum_{j\ge 2} \norm{\vh_{S_j}} \le \sum_{j\ge 2}  \sqrt{ak} \norms{\vh_{S_j}}_{\infty} \le \sum_{j\ge 2}  \frac{\normone{\vh_{S_{j-1}}}}{\sqrt{ak}} = \frac{\normone{\vh_{T^c \cap S_0^c}} }{\sqrt{ak}}.
\end{equation}
Putting \eqref{eq:hsc} into \eqref{eq:hsjsum}, claim \eqref{eq:conp2} follows, namely,
\begin{equation} \label{eq:cla11}
\sum_{j\ge 2} \norm{\vh_{S_j}} \le  \frac{\normone{ \vh_{S_0}}+ 2\normone{\vxs_{T^c \cap S_0^c}} }{\sqrt{ak}} \le \frac1{\sqrt a} \norm{ \vh_{S_0}} +  \frac{2\normone{\vxs_{T^c \cap S_0^c}} }{\sqrt{ak}},
\end{equation}
where the last inequality comes from the Cauchy-Schwarz inequality that $\normone{ \vh_{S_0}} \le \sqrt{k} \norm{ \vh_{S_0}}$.

We next turn to prove the claim \eqref{eq:conp3}.    By the triangle inequality,  one has
\begin{equation} \label{eq:Ah}
\norm{A\vh} =\norm{A(\z -\vxs)} \le \norm{A \z -\vy}+ \norm{A\vxs -\vy} \le 2\epsilon,
\end{equation}
where the last inequality follows from the fact that $\z$ is feasible for the program  \eqref{eq:mod1}. Note that $\vh_{T\cup S_0\cup S_1}=\vh- \sum_{j\ge 2} \vh_{S_j} $. From the restricted isometry property with $K=\max\dkh{(a+1)k+k_T, 2ak}$, we have
\begin{eqnarray}
(1-\delta_{K,l}) \norm{ \vh_{T\cup S_0\cup S_1}}^2  &\le&   \norm{A \vh_{T\cup S_0\cup S_1}}^2  \notag\\
&=& \nj{A \vh_{T\cup S_0\cup S_1}, A\vh}-\nj{A \vh_{T\cup S_0\cup S_1},  \sum_{j\ge 2}A \vh_{S_j} }  \notag\\
&\le & \sqrt{1+\delta_{K,u}} \norm{ \vh_{T\cup S_0\cup S_1}} \norm{A\vh} + \delta_{K} \xkh{\norm{\vh_{T\cup S_0}}+\norm{\vh_{S_1}}}  \sum_{j\ge 2} \norm{\vh_{S_j}} \notag \\
&\le & 2\epsilon \sqrt{1+\delta_{K}} \norm{ \vh_{T\cup S_0\cup S_1}} + \sqrt2 \delta_K  \norm{\vh_{T\cup S_0\cup S_1}} \sum_{j\ge 2} \norm{\vh_{S_j}}, \label{eq:AhTS12}
\end{eqnarray}
where we use Lemma \ref{le:RIPcandes} in the third line and  \eqref{eq:Ah} in the last line. Putting \eqref{eq:cla11} into \eqref{eq:AhTS12}, one immediately has
\[
 \norm{ \vh_{T\cup S_0\cup S_1}} \le \frac{2 \sqrt{1+\delta_K} }{1-\delta_{K,l}} \epsilon + \frac{\sqrt2 \delta_K}{\sqrt{a}(1-\delta_{K,l})} \norm{ \vh_{S_0}} +  \frac{2\sqrt2 \delta_K}{\sqrt{ak}(1-\delta_{K,l})} \normone{\vxs_{T^c \cap S_0^c}}.
\]
Note that $\norm{ \vh_{S_0}} \le  \norm{ \vh_{T\cup S_0\cup S_1}}$. Solving the above inequality gives
\[
 \norm{ \vh_{T\cup S_0\cup S_1}} \le   \frac{2}{\sqrt{a}(1-\delta_{K,l})-\sqrt2\delta_K} \xkh{\sqrt{a(1+\delta_K)} \epsilon+ \frac{\sqrt2 \delta_K  \normone{\vxs_{T^c \cap S_0^c}} }{ \sqrt{k}}}.
\]
This completes the claim \eqref{eq:conp3}.
\end{proof}

The next theorem demonstrates that if $T$ is a good estimate of the true support $S$, then the solution $\z$ to tail-$\ell_1$ minimization \eqref{eq:mod1} serves as a reliable estimate of the true signal $\vx$ even when the RIP constant approaches $1$, which is a significant improvement over RIP constants for traditional $\ell_1$ minimization.

\begin{theorem} \label{th:0}
Suppose $T \subset [n]$ with $|T|=k_T$. For any $a>0$,  let $K'=\max\dkh{ak+k_T, 2ak}$.  Assume that  the matrix $A\in \R^{m\times n}$ satisfies order-$K'$ RIP condition with constants $0<\delta_{K',l}<1$ and $\delta_{K',u}>0$. Then the following holds: for all $\vxs \in \R^n$, the solution $\z$ to \eqref{eq:mod1} with $\vy=A\vxs+\vw$ and $\norm{\vw} \le \epsilon$ obeys
\[
\norm{\z-\vxs} \le C_1 \epsilon+ C_2 \frac{\normone{\vxs_{T^c}} }{\sqrt{ak}}.
\]
Here,
\[
C_1:=  \frac{2 \sqrt{1+\delta_{K'}} }{1-\delta_{K', l}}, \quad C_2:=\frac{2\sqrt2 \delta_{K'}}{1-\delta_{K',l}} +2
\]
with $\delta_{K'}=\max\dkh{\delta_{K', l},\delta_{K', u}}$.
\end{theorem}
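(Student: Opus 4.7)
The plan is to follow the same dyadic-decomposition scheme used in the proof of Theorem \ref{th:1}, with the key simplification that no separate $S_0$ block is singled out; this is precisely what permits the weaker RIP order $K'=\max\dkh{ak+k_T,2ak}$ (versus $K=\max\dkh{(a+1)k+k_T,2ak}$) and removes any condition tying $a$ to $\delta$. Set $\vh=\z-\vxs$ and partition $\vh$ into $\vh_T,\vh_{S_1},\vh_{S_2},\ldots$, where $S_1\subset T^c$ indexes the $ak$ largest entries of $\vh_{T^c}$ in magnitude, $S_2\subset T^c$ the next $ak$, and so on.

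First I would use optimality of $\z$ together with feasibility of $\vxs$ in \eqref{eq:mod1} and the reverse triangle inequality to write
\[
\normone{\vxs_{T^c}}\ge\normone{\z_{T^c}}=\normone{\vh_{T^c}+\vxs_{T^c}}\ge\normone{\vh_{T^c}}-\normone{\vxs_{T^c}},
\]
which gives $\normone{\vh_{T^c}}\le 2\normone{\vxs_{T^c}}$ directly, with no $\vh_{S_0}$ residual to carry along. The standard monotonicity argument, $\norms{\vh_{S_j}}_\infty\le \normone{\vh_{S_{j-1}}}/(ak)$ for $j\ge 2$, then yields the tail estimate
\[
\sum_{j\ge 2}\norm{\vh_{S_j}}\le\frac{\normone{\vh_{T^c}}}{\sqrt{ak}}\le\frac{2\normone{\vxs_{T^c}}}{\sqrt{ak}}.
\]

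Next I would control $\norm{\vh_{T\cup S_1}}$ via the RIP. Since $|T\cup S_1|\le k_T+ak\le K'$, the order-$K'$ RIP gives $(1-\delta_{K',l})\norm{\vh_{T\cup S_1}}^2\le\norm{A\vh_{T\cup S_1}}^2$, and the decomposition $A\vh_{T\cup S_1}=A\vh-\sum_{j\ge 2}A\vh_{S_j}$ reduces matters to bounding the cross terms $\nj{A\vh_{T\cup S_1},A\vh_{S_j}}$. This is the step I expect to be the main obstacle, because $|T\cup S_1|+|S_j|$ can exceed $K'$, so Lemma \ref{le:RIPcandes} does not apply to $T\cup S_1$ as one piece. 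The remedy is to split $\vh_{T\cup S_1}=\vh_T+\vh_{S_1}$ and invoke Lemma \ref{le:RIPcandes} separately on the pairs $(T,S_j)$ and $(S_1,S_j)$, whose cardinality sums are respectively $k_T+ak\le K'$ and $2ak\le K'$. A Cauchy--Schwarz step $\norm{\vh_T}+\norm{\vh_{S_1}}\le\sqrt{2}\norm{\vh_{T\cup S_1}}$ then yields
\[
\sum_{j\ge 2}\abs{\nj{A\vh_{T\cup S_1},A\vh_{S_j}}}\le\sqrt{2}\,\delta_{K'}\norm{\vh_{T\cup S_1}}\sum_{j\ge 2}\norm{\vh_{S_j}}.
\]

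To finish, I would combine these estimates with the feasibility bound $\norm{A\vh}\le 2\epsilon$, cancel a factor of $\norm{\vh_{T\cup S_1}}$, substitute the tail estimate from the previous step, and conclude via $\norm{\vh}\le\norm{\vh_{T\cup S_1}}+\sum_{j\ge 2}\norm{\vh_{S_j}}$. This routine calculation recovers exactly the stated constants $C_1=2\sqrt{1+\delta_{K'}}/(1-\delta_{K',l})$ and $C_2=2\sqrt{2}\,\delta_{K'}/(1-\delta_{K',l})+2$. Notably, no condition of the form $a>2\delta_{K'}^2/(1-\delta_{K',l})^2$ is needed, because without an $S_0$ block the RIP inequality never reintroduces a term proportional to $\norm{\vh_{T\cup S_1}}$ (as happened in Theorem \ref{th:1} through the $\norm{\vh_{S_0}}$ contribution), so the linear inequality in $\norm{\vh_{T\cup S_1}}$ closes immediately.
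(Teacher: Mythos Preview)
Your proposal is correct and essentially identical to the paper's proof: the same decomposition $\vh_T,\vh_{S_1},\vh_{S_2},\ldots$ (no $S_0$ block), the same reverse-triangle step giving $\normone{\vh_{T^c}}\le 2\normone{\vxs_{T^c}}$, the same tail bound, and the same RIP step with the split $\vh_{T\cup S_1}=\vh_T+\vh_{S_1}$ to keep each cross term within the order-$K'$ RIP. Your explanation of why the split is needed and why no condition on $a$ arises is spot-on.
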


\begin{remark}
Setting $k_T=k$ and considering the case where $0<a\le 1$ in Theorem \ref{th:0}, when $T$ is a good estimate of the support $S$,  the target signal $\vxs$ can be stably reconstructed by tail-$\ell_1$ minimization, provided that the matrix $A$ satisfies order-$(a+1)k$ RIP condition with constant $0<\delta_{(a+1)k}<1$.   

Compared to existing results, such as $\delta_{2k}<\sqrt2-1$ in {\rm \cite{RIPcandes}}, $\delta_{2k}<0.472$ in {\rm \cite{RIPTcai}}, $\delta_{2k}<0.4931$ in {\rm \cite{Mo}}, $\delta_{2k}<0.5$ in {\rm \cite{RIPTcai9}}, and $\delta_{ak}<\sqrt{(a-1)/a}$ for $a\ge 4/3$ {\rm \cite{RIPTcaiz}},   our result only requires  that any $\lceil (a+1)k \rceil$ columns of $A$ are linear independent for $a>0$, which reaches nearly spark-level sparsity recovery performance. 

To validate our theoretical results, in Figure  \ref{figure:rip}, we plot the RIP constant curve for a Gaussian matrix $A\in \R^{64\times 256}$ with each entry  $a_{i,j} \sim \mathcal N(0,1/m)$  and record the success rates of  tail-$\ell_1$ minimization \eqref{eq:mod1}, comparing it to classical $\ell_1$ minimization \eqref{eq:ell1} across different sparsity levels. The success rates are computed from $1000$ independent trials, with a trial considered successful if the solution $\z$  satisfying  $\norm{\z-\vx}\le 10^{-5}$. Both \eqref{eq:ell1} and \eqref{eq:mod1} are solved using the CVX toolbox. Figure \ref{figure:ripa} shows that when the sparsity
 $k$ is around $50$, the RIP constant is close to $1$. However, as presented in Figure \ref{figure:ripb},  tail-$\ell_1$ minimization \eqref{eq:mod1} still performs well, provided $T$ is a good approximation of the true support $S$.
\end{remark}

\begin{figure}[H]
\centering
\subfigure[]{
     \includegraphics[width=0.4\textwidth]{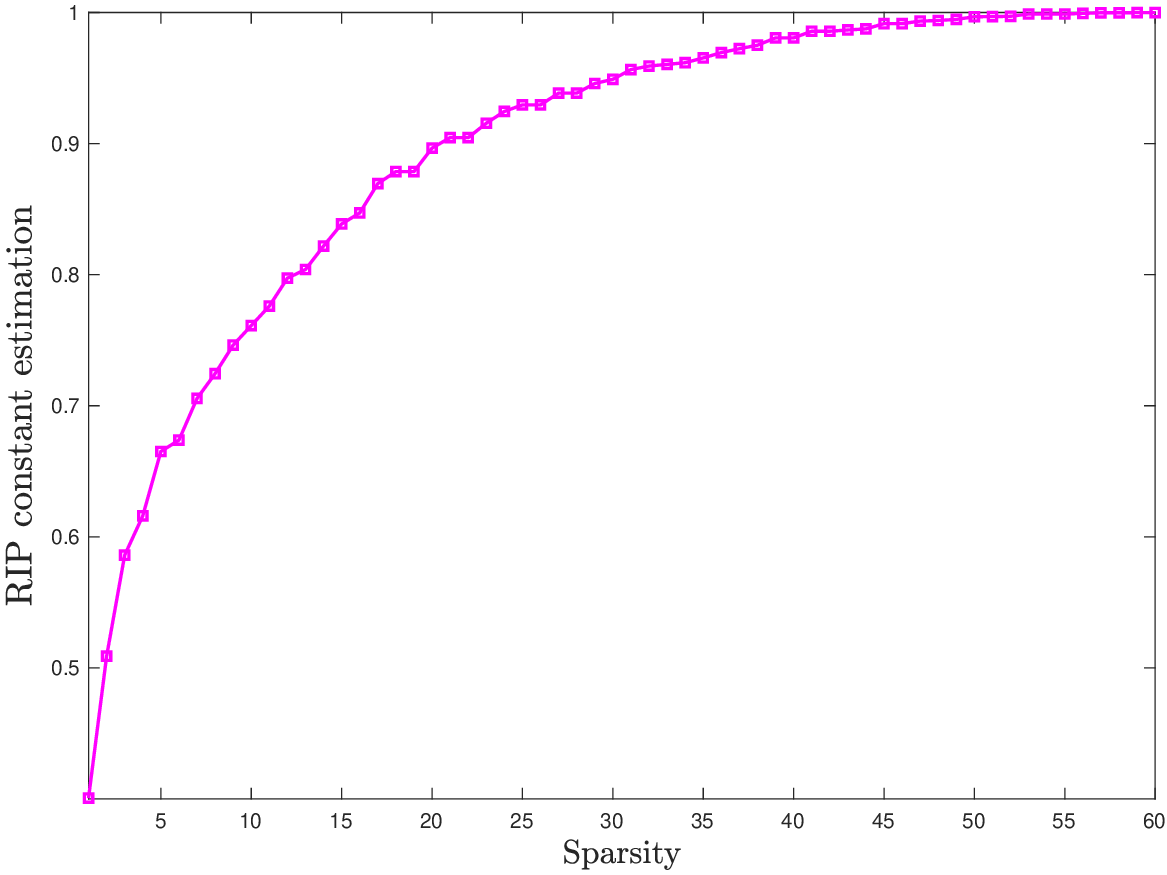}
     \label{figure:ripa}}
\subfigure[]{
     \includegraphics[width=0.42\textwidth]{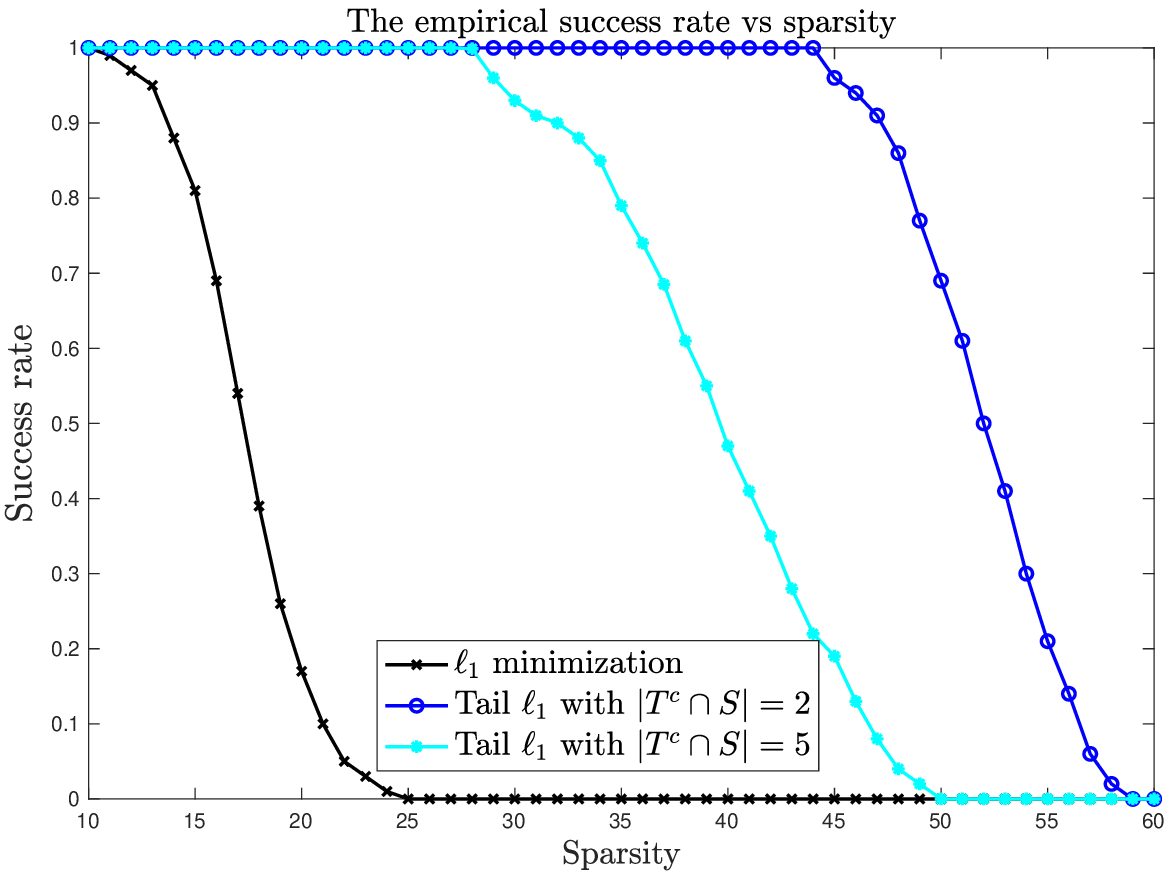}
     \label{figure:ripb}}
\caption{Recovery performance of tail-$\ell_1$ minimization and the classical $\ell_1$ minimization under different RIP constants recorded in the same set experiments: (a) The curve of RIP constant $\delta_{k,l}$ vs sparsity $k$; (b) The success rate vs sparsity in noiseless case. Here, the measurement matrix $A\in \R^{64\times 256}$.}
\label{figure:rip}
\end{figure}

\begin{remark} \label{re:28}
To guarantee the effectiveness of Theorem \ref{th:0}, the estimated support $T$ must closely approximate the true support $S$ of  $\vx$. Our numerical experiments show that if $T$ is selected using an adaptive strategy where $T_1$ indexes the $k_0$ (normally $k_0=1$) largest components in magnitude of the first iteration $\vz_1$, $T_2$ indexes the $k_0+k'$ (typically $k'=1$) largest  components in magnitude of the second iteration $\vz_2$,  $T_3$ indexes the $k_0+2k'$ largest components of the third iteration $\vz_3$ in magnitude, and so on, then after a few iterations, the set $T_k$ is seen to converge towards the true support $S$ of  $\vx$.
\end{remark}

\begin{proof}[Proof of Theorem \ref{th:0}]
Set $\vh=\z-\vxs$. Since $\z$ is the solution to the tail-$\ell_1$ minimization \eqref{eq:mod1}, we have
\begin{eqnarray*}
\normone{\vxs_{T^c}} \ge \normone{\z_{T^c}} =  \normone{\vh_{T^c}+\vxs_{T^c}} \ge   \normone{\vh_{T^c}} - \normone{\vxs_{T^c}}
\end{eqnarray*}
It immediately gives
\begin{equation} \label{eq:hsc12}
\normone{ \vh_{T^c}} \le 2\normone{\vxs_{T^c}}.
\end{equation}
Next, for any $a>0$, we decompose $\vh$ into a sum of vectors $\vh_{T}, \vh_{S_1}, \vh_{S_2},\ldots$, where $S_1$ corresponds to the locations of the $ak$ largest-magnitude of $\vh_{T^c}$, $S_2$ corresponds to the locations of the next $ak$ largest-magnitude of $\vh_{T^c}$, and so on.  Observe that
\begin{equation} \label{eq:conp112}
\norm{\vh} \le  \norm{ \vh_{T \cup S_1}} +\sum_{j\ge 2} \norm{\vh_{S_j}}.
\end{equation}
For the second term in right hand side of \eqref{eq:conp112}, it follows from the definition of $S_j$  that
\begin{equation} \label{eq:hsjsum12}
\sum_{j\ge 2} \norm{\vh_{S_j}} \le \sum_{j\ge 2}  \sqrt{ak} \norms{\vh_{S_j}}_{\infty} \le \sum_{j\ge 2}  \frac{\normone{\vh_{S_{j-1}}}}{\sqrt{ak}} = \frac{\normone{\vh_{T^c}} }{\sqrt{ak}} \le  \frac{ 2\normone{\vxs_{T^c}} }{\sqrt{ak}},
\end{equation}
where the last inequality comes from \eqref{eq:hsc12}.  To give an upper bound for the first term in right hand side of \eqref{eq:conp112}, due to the restricted isometry property of the matrix $A$, we have
\begin{equation} \label{eq:AhTS12121}
  \norm{A \vh_{T\cup S_1}}^2 \ge (1-\delta_{K',l}) \norm{ \vh_{T\cup S_1}}^2.
\end{equation}
Here,  $K'=\max\dkh{ak+k_T, 2ak}$. Therefore, it suffices to find an upper bound of $\norm{A \vh_{T\cup S_1}}^2$. Note that $\vh_{T \cup S_1}=\vh- \sum_{j\ge 2} \vh_{S_j} $.  One has
\begin{eqnarray}
 \norm{A \vh_{T\cup S_1}}^2 &=& \nj{A \vh_{T\cup S_1}, A\vh}-\nj{A \vh_{T\cup S_1},  \sum_{j\ge 2} A \vh_{S_j} }  \notag\\
&\le & \sqrt{1+\delta_{K'}} \norm{A \vh_{T \cup S_1}} \norm{A\vh} + \delta_{K'} \xkh{\norm{\vh_{T}}+\norm{\vh_{S_1}}}  \sum_{j\ge 2} \norm{\vh_{S_j}} \notag \\
&\le & 2\epsilon \sqrt{1+\delta_{K'}} \norm{ \vh_{T \cup S_1}} + \sqrt2 \delta_{K'}  \norm{ \vh_{T \cup S_1}} \sum_{j\ge 2} \norm{\vh_{S_j}}, \label{eq:AhTS1212}
\end{eqnarray}
where the first inequality follows from  Lemma \ref{le:RIPcandes} and the fact that
\begin{equation*}
\norm{A\vh} =\norm{A(\z -\vxs)} \le \norm{A \z -\vy}+ \norm{A\vxs -\vy} \le 2\epsilon.
\end{equation*}
Here, the first inequality arises from the triangle inequality and the last inequality follows from the fact that $\z$ is feasible for the problem  \eqref{eq:mod1}. Combining \eqref{eq:hsjsum12}, \eqref{eq:AhTS12121} and \eqref{eq:AhTS1212} together, we immediately obtain
\begin{equation} \label{eq:conp312}
  \norm{ \vh_{T\cup S_1}} \le \frac{2 \sqrt{1+\delta_{K'}} }{1-\delta_{K',l}} \epsilon +  \frac{2\sqrt2 \delta_{K'}}{\sqrt{ak}(1-\delta_{K',l})} \normone{\vxs_{T^c}}.
\end{equation}
Putting \eqref{eq:hsjsum12} and \eqref{eq:conp312} into \eqref{eq:conp112}, we obtain that
\[
\norm{\vh} \le  \frac{2 \sqrt{1+\delta_{K'}} }{1-\delta_{K',l}} \epsilon +  \xkh{ \frac{2\sqrt2 \delta_{K'}}{1-\delta_{K',l}} +2 } \frac{\normone{\vxs_{T^c}} }{\sqrt{ak}}.
\]
This completes the proof.

\end{proof}

%

In Theorem \ref{th:1}, when $T=\emptyset$ and $a=1$,  the next example shows that  the traditional result in \cite{RIPcandes} is re-produced for recovering the sparse signals through $\ell_1$ minimization program.

\begin{example} \label{exap:1}
Assume that  the matrix $A\in \R^{m\times n}$ satisfies order-$2k$ RIP condition with constant $\delta_{2k}<\sqrt2-1$. Then  the following holds: for all $\vxs \in \R^n$, the solution $\z$ to $\ell_1$-minimization
\[
\min_{\vz \in \R^n } \qquad \normone{\vz }  \quad \mbox{s.t.} \quad \norm{A\vz-\vy}\le \epsilon
\]
 with $\vy=A\vxs+\vw$ and $\norm{\vw} \le \epsilon$ obeys
\[
\norm{\z-\vxs} \le C_1 \epsilon+ C_2 \frac{\normone{\vxs_{S^c}} }{\sqrt{k}}.
\]
Here,  $S$ index the $k$-largest-magnitude components of $\vxs$, and
\[
C_1:= \frac{ 4 \sqrt{1+\delta_{2k}}}{1-(\sqrt2+1)\delta_{2k}}, \quad C_2:=\xkh{\frac{4\sqrt2 \delta_{2k}}{1-(\sqrt2+1)\delta_{2k}} +2} .
\]
\end{example}

\begin{remark}
Note that the optimal RIP constant for $\ell_1$-minimization is $\delta_{2k} < 1/\sqrt2$ {\rm \cite{RIPTcaiz}}.  Example \ref{exap:1} implies  that the RIP constant $\delta_K$ given in \eqref{eq:raK} may not be optimal. This can be improved using the method outlined in {\rm \cite{RIPTcaiz}}, which is left in subsequent future work.
\end{remark}

\subsection{The error bounds for tail-lasso}


In this subsection, we give the performance analysis for the unconstrained tail-$\ell_1$ minimization \eqref{eq:mod2}.

\begin{theorem} \label{th:2}
Suppose the estimated support $T \subset [n]$ with $|T|=k_T$. For any $a>0$,  let $K=\max\dkh{(a+1)k+k_T, 2ak}$ and $\gamma=\sqrt{\frac{1+k_T/k}{a}}$. Assume that the matrix $A\in \R^{m\times n}$ satisfies order-$K$ RIP condition with constant $\delta_K$ obeying
\[
\delta_K+ 3\sqrt2  \gamma \delta_K <1.
\]
Then the following holds: for all $\vxs \in \R^n$, the solution $\z$ to \eqref{eq:mod2} with $\lambda > 2\norms{A^\T \vw}_\infty$ and $\vy=A\vxs+\vw$ satisfies
\begin{equation} \label{eq:zhonn}
\norm{\z-\vx} \le C_1 \lambda \sqrt{k+k_T} + C_2 \sqrt{ \lambda \normone{\vxs_{T^c \cap S_0^c}} }+C_3 \frac{\normone{\vxs_{T^c \cap S_0^c}}}{\sqrt{ak}}.
\end{equation}
Here, $S_0$ index the $k$-largest  components of $\vxs_{T^c}$ in magnitude, and
\[
  C_1:= \frac{6(1+\delta_K)\xkh{3  \alpha +1} }{\xkh{ 1-\beta }^2}, \qquad C_2:=  \frac{2\sqrt{2}\xkh{3  \alpha +1} }{ 1-\beta }, \qquad C_3:=  \frac{8\delta_K \xkh{3  \alpha +1} }{ 1-\beta }+4,
\]
where $\beta:=\delta_K+ 3\sqrt2  \gamma \delta_K<1$.
\end{theorem}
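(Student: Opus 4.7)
\textbf{Proof proposal for Theorem \ref{th:2}.}
The plan is to mirror the block-decomposition argument of Theorem \ref{th:1}, but to replace the feasibility bound $\|A\vh\|_2\le 2\epsilon$ by an energy bound derived from the optimality of $\z$ for the unconstrained objective. First, set $\vh=\z-\vxs$ and start from the optimality inequality
\[
\tfrac{1}{2}\|A\vh-\vw\|_2^{2}+\lambda\|\z_{T^{c}}\|_{1}\le \tfrac{1}{2}\|\vw\|_2^{2}+\lambda\|\vxs_{T^{c}}\|_{1},
\]
which expands to $\tfrac{1}{2}\|A\vh\|_2^{2}-\langle A\vh,\vw\rangle \le \lambda(\|\vxs_{T^{c}}\|_1-\|\z_{T^{c}}\|_1)$. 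Estimating the cross term by H\"older, $\langle A\vh,\vw\rangle \le \|\vh\|_{1}\|A^{\T}\vw\|_{\infty} \le \tfrac{\lambda}{2}\|\vh\|_{1}$, and recycling the computation \eqref{eq:hsc} for the right-hand side together with $\|\vh\|_1=\|\vh_T\|_1+\|\vh_{S_0}\|_1+\|\vh_{T^c\cap S_0^c}\|_1$, I would obtain the master inequality
\[
\tfrac{1}{2}\|A\vh\|_2^{2}+\tfrac{\lambda}{2}\|\vh_{T^{c}\cap S_{0}^{c}}\|_{1}\le \tfrac{\lambda}{2}\|\vh_{T}\|_{1}+\tfrac{3\lambda}{2}\|\vh_{S_{0}}\|_{1}+2\lambda\|\vxs_{T^{c}\cap S_{0}^{c}}\|_{1}.
\]
This single relation delivers two things: (i) a cone-type bound $\|\vh_{T^c\cap S_0^c}\|_1 \le \|\vh_T\|_1+3\|\vh_{S_0}\|_1+4\|\vxs_{T^c\cap S_0^c}\|_1$ (dropping the $\|A\vh\|_2^2$ term), and (ii) a \emph{data-dependent} upper bound $\|A\vh\|_2^{2}\le \lambda\|\vh_{T}\|_1+3\lambda\|\vh_{S_{0}}\|_1+4\lambda\|\vxs_{T^{c}\cap S_{0}^{c}}\|_1$.

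Next, introducing the same partition $\vh=\vh_{T}+\vh_{S_0}+\sum_{j\ge 1}\vh_{S_j}$ with $|S_j|=ak$ used in Theorem \ref{th:1}, I would couple the two $\ell_1$-norms with Cauchy--Schwarz in the form $\sqrt{k_T}\|\vh_T\|_2+3\sqrt{k}\|\vh_{S_0}\|_2\le \sqrt{k_T+9k}\,\|\vh_{T\cup S_0}\|_2\le 3\sqrt{k+k_T}\,\|\vh_{T\cup S_0}\|_2$. Inserting this into (i) and the tail-summing inequality \eqref{eq:hsjsum} produces
\[
\sum_{j\ge 2}\|\vh_{S_j}\|_2 \le 3\gamma\,\|\vh_{T\cup S_0}\|_{2}+\frac{4\|\vxs_{T^{c}\cap S_{0}^{c}}\|_1}{\sqrt{ak}}.
\]
Writing $U:=\|\vh_{T\cup S_0\cup S_1}\|_{2}$ and running the RIP computation of \eqref{eq:AhTS12} (using Lemma \ref{le:RIPcandes} and the identity $\vh_{T\cup S_0\cup S_1}=\vh-\sum_{j\ge 2}\vh_{S_j}$), I would arrive at
\[
(1-\delta_{K})U \le \sqrt{1+\delta_{K}}\,\|A\vh\|_{2}+\sqrt{2}\,\delta_K\!\left(3\gamma U+\frac{4\|\vxs_{T^{c}\cap S_{0}^{c}}\|_1}{\sqrt{ak}}\right),
\]
which, after collecting $U$ on the left, becomes $(1-\beta)U\le \sqrt{1+\delta_K}\|A\vh\|_{2}+\frac{4\sqrt{2}\,\delta_{K}\|\vxs_{T^{c}\cap S_{0}^{c}}\|_1}{\sqrt{ak}}$ with $\beta=\delta_K+3\sqrt{2}\gamma\delta_K$.

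The main obstacle, absent in the constrained analysis, is that $\|A\vh\|_2$ itself is now controlled by $U$ rather than by $\epsilon$. Indeed, item (ii) together with the same Cauchy--Schwarz step gives $\|A\vh\|_2^{2}\le 3\lambda\sqrt{k+k_T}\,U+4\lambda\|\vxs_{T^{c}\cap S_{0}^{c}}\|_1$, so $\|A\vh\|_2\le \sqrt{3\lambda\sqrt{k+k_T}\,U}+2\sqrt{\lambda\|\vxs_{T^{c}\cap S_{0}^{c}}\|_1}$ by $\sqrt{p+q}\le\sqrt{p}+\sqrt{q}$. Substituting back yields a quadratic inequality in $\sqrt{U}$ of the shape $(1-\beta)U\le B\sqrt{U}+C$, which I would solve (using $(p+q)^{2}\le 2p^{2}+2q^{2}$) to produce
\[
U\le \frac{B^{2}}{(1-\beta)^{2}}+\frac{2C}{1-\beta}= \frac{3(1+\delta_K)\lambda\sqrt{k+k_T}}{(1-\beta)^{2}}+\frac{C_2'\sqrt{\lambda\|\vxs_{T^{c}\cap S_{0}^{c}}\|_1}}{1-\beta}+\frac{C_3'\|\vxs_{T^{c}\cap S_{0}^{c}}\|_1}{(1-\beta)\sqrt{ak}}
\]
for explicit constants. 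Finally, applying the triangle bound $\|\vh\|_2\le U+\sum_{j\ge 2}\|\vh_{S_j}\|_2\le (1+3\gamma)U+\frac{4\|\vxs_{T^{c}\cap S_{0}^{c}}\|_1}{\sqrt{ak}}$ and collecting terms delivers \eqref{eq:zhonn} (with $\alpha=\gamma$); the three terms on the right-hand side of \eqref{eq:zhonn} are precisely the images of $B^{2}/(1-\beta)^{2}$, the $\sqrt{\lambda\|\vxs_{T^{c}\cap S_{0}^{c}}\|_1}$ part of $2C/(1-\beta)$, and the $\|\vxs_{T^{c}\cap S_{0}^{c}}\|_1$ part of $2C/(1-\beta)$ together with the tail contribution. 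The bookkeeping of constants, in particular tracking the $(3\gamma+1)$ factor introduced by the triangle bound, is the most delicate step and is where the stated $C_1,C_2,C_3$ emerge.
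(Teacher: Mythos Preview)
Your proposal is correct and follows essentially the same route as the paper: the optimality inequality for the tail-lasso is expanded (via H\"older and $\lambda>2\|A^{\T}\vw\|_\infty$) into the same ``master inequality'' the paper derives in \eqref{eq:Ahlam}, yielding both the cone condition \eqref{eq:hsc111} and the energy bound \eqref{eq:lamres}; the block decomposition, tail-summing, and RIP step \eqref{eq:AhTS1225} are identical, and the resulting quadratic inequality in $U=\|\vh_{T\cup S_0\cup S_1}\|_2$ is resolved the same way. Your slightly sharper Cauchy--Schwarz step $\sqrt{k_T}\|\vh_T\|_2+3\sqrt{k}\|\vh_{S_0}\|_2\le\sqrt{k_T+9k}\,\|\vh_{T\cup S_0}\|_2$ and your particular way of splitting $\sqrt{p+q}$ lead to marginally different numerical constants than the paper's (e.g.\ your first term carries $3(1+\delta_K)$ where the paper has $6(1+\delta_K)$), but as you note this is pure bookkeeping and does not affect the argument.
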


\begin{remark}
The error bound \eqref{eq:zhonn} is optimal for $k$-sparse signals. To see this, consider  the Gaussian noise vector $\vw \sim \frac{\sigma}{\sqrt{m}}\cdot \mathcal{N}(0, I_m)$,  if the columns of the matrix $A$ are normalized, then with high probability $\norms{A^\T \vw}_\infty =O(\sigma\sqrt{\frac{\log n}{m}})$.   For this case,  if we take $\lambda=O(\sigma\sqrt{\frac{\log n}{m}})$, then the error bound \eqref{eq:zhonn} is $\norm{\z-\vxs} \le O(\sigma \sqrt{\frac{k\log n}{m}})$ for all $k$-sparse signals $\vxs$, which is optimal.
\end{remark}

\begin{proof}[Proof of Theorem \ref{th:2}]
Set $\vh=\z-\vxs$.  For any $a>0$,  let $\vh=\vh_T+\vh_{S_0} + \sum_{j\ge 1} \vh_{S_j}$, where $S_0$ indexes  the $k$-largest components of $\vxs_{T^c}$ in magnitude, $S_1$ corresponds to the locations of the $ak$-largest of $\vh_{T^c \cap S_0^c}$in magnitude, $S_2$ corresponds to the locations of the next $ak$-largest of $\vh_{T^c \cap S_0^c}$ in magnitude, and so on.  By the triangle inequality, it is easy to see that
\begin{equation} \label{eq:conp2117}
\norm{\vh} \le  \norm{ \vh_{T\cup S_0\cup S_1}} +\sum_{j\ge 2} \norm{\vh_{S_j}}.
\end{equation}
We claim that the following holds:
\begin{equation} \label{eq:conp2217}
\sum_{j\ge 2} \norm{\vh_{S_j}} \le 3 \sqrt{\frac{1+k_T/k}{a}}\norm{ \vh_{T\cup S_0}} +  \frac{4\normone{\vxs_{T^c \cap S_0^c}} }{\sqrt{ak}}
\end{equation}
and
\begin{equation} \label{eq:conp2317}
 \norm{ \vh_{T\cup S_0\cup S_1}} \le   \frac{6(1+\delta_K) \lambda \sqrt{k+k_T}}{(1-\beta)^2} + \frac{2\sqrt{2\lambda }}{1-\beta} \sqrt{\normone{\vxs_{T^c \cap S_0^c}} }+\frac{8\delta_K}{(1-\beta) \sqrt{ak}} \normone{\vxs_{T^c \cap S_0^c}}.
\end{equation}
Here, $\beta:=\delta_K+ 3\sqrt2 \delta_K \sqrt{\frac{1+k_T/k}{a}}<1$.
Putting \eqref{eq:conp2217} and \eqref{eq:conp2317} into \eqref{eq:conp2117}, we obtain the conclusion that
\[
\norm{\vh} \le C_1 \lambda \sqrt{k+k_T} + C_2 \sqrt{ \lambda \normone{\vxs_{T^c \cap S_0^c}} }+C_3 \frac{\normone{\vxs_{T^c \cap S_0^c}}}{\sqrt{ak}},
\]
where
\[
  C_1:= \frac{6(1+\delta_K)\xkh{3  \gamma +1} }{\xkh{ 1-\beta }^2}, \qquad C_2:=  \frac{2\sqrt{2}\xkh{3  \gamma +1} }{ 1-\beta }, \qquad C_3:=  \frac{8\delta_K \xkh{3  \gamma +1} }{ 1-\beta }+4
\]
with $\gamma=\sqrt{\frac{1+k_T/k}{a}}$.

For the claim \eqref{eq:conp2217}, since $\z$ is the solution to tail-lasso \eqref{eq:mod2}, one has
\[
\frac12\norm{A\z-\vy}^2+ \lambda\normone{\z_{T^c}} \le \frac12\norm{A\vxs-\vy}^2+ \lambda\normone{\vxs_{T^c}}.
\]
Note that $\vy=A\vxs+\vw$ and $\vh=\z-\vxs$. A simple calculation gives
\begin{eqnarray*}
\norm{A\vh}^2 & \le &  2\nj{\vh, A^\T\vw} +2\lambda\xkh{\normone{\vxs_{T^c}}- \normone{\z_{T^c}}} \\
&\le & 2 \norms{A^\T \vw}_\infty \normone{\vh} +2\lambda\xkh{\normone{\vxs_{T^c}}- \normone{\z_{T^c}}}.
\end{eqnarray*}
Since
\begin{eqnarray*}
\normone{\z_{T^c}} = \normone{\vh_{T^c}+\vxs_{T^c}} &\ge & \normone{\vh_{T^c \cap S_0^c}+\vxs_{T^c \cap S_0} } - \normone{\vh_{T^c \cap S_0}+\vxs_{T^c \cap S_0^c} } \\
&=& \normone{ \vh_{T^c \cap S_0^c}}+ \normone{\vxs_{ S_0}} - \normone{ \vh_{ S_0}}- \normone{\vxs_{T^c \cap S_0^c}}
\end{eqnarray*}
and $\normone{\vxs_{T^c}} = \normone{\vxs_{S_0}} + \normone{\vxs_{T^c \cap S_0^c}}$, it then follows from the fact $2\norms{A^\T \vw}_\infty < \lambda $ that
\begin{eqnarray}
0\le \norm{A\vh}^2 &\le  & \lambda \normone{\vh}+2\lambda\xkh{\normone{ \vh_{S_0}}-  \normone{ \vh_{T^c \cap S_0^c}}+ 2\normone{\vxs_{T^c \cap S_0^c}}} \notag \\
&\le  &  \lambda\xkh{3\normone{\vh_{T \cup S_0}}-  \normone{ \vh_{T^c \cap S_0^c}}+ 4\normone{\vxs_{T^c \cap S_0^c}} }. \label{eq:Ahlam}
\end{eqnarray}
This immediately gives
\begin{equation} \label{eq:hsc111}
\normone{ \vh_{T^c \cap S_0^c}} \le 3\normone{ \vh_{T \cup S_0}}+ 4\normone{\vxs_{T^c \cap S_0^c}}.
\end{equation}
From the definition of $S_j$ , we know
\begin{equation} \label{eq:hsjsum111}
\sum_{j\ge 2} \norm{\vh_{S_j}} \le \sum_{j\ge 2}  \sqrt{ak} \norms{\vh_{S_j}}_{\infty} \le \sum_{j\ge 2}  \frac{\normone{\vh_{S_{j-1}}}}{\sqrt{ak}} = \frac{\normone{\vh_{T^c \cap S_0^c}} }{\sqrt{ak}}.
\end{equation}
Putting \eqref{eq:hsc111} into \eqref{eq:hsjsum111}, we obtain the conclusion of claim \eqref{eq:conp2217}, namely,
\begin{equation} \label{eq:cla1}
\sum_{j\ge 2} \norm{\vh_{S_j}} \le  \frac{3\normone{ \vh_{T \cup S_0}}+ 4\normone{\vxs_{T^c \cap S_0^c}} }{\sqrt{ak}} \le 3\sqrt{\frac{1+k_T/k}{a}}\norm{ \vh_{T\cup S_0}} +  \frac{4\normone{\vxs_{T^c \cap S_0^c}} }{\sqrt{ak}},
\end{equation}
where the last inequality comes from the Cauchy-Schwarz inequality and the fact that $|T\cup S_0|=k+k_T$.

For the claim \eqref{eq:conp2317},   it follows from \eqref{eq:Ahlam} that
\begin{equation} \label{eq:lamres}
\norm{A\vh}^2 \le 3\lambda \sqrt{k+k_T} \norm{\vh_{T \cup S_0}}+ 4\lambda \normone{\vxs_{T^c \cap S_0^c}}.
\end{equation}
Note that $\vh_{T\cup S_0\cup S_1}=\vh- \sum_{j\ge 2} \vh_{S_j} $. From the restricted isometry property with $K=\max\dkh{(a+1)k+k_T, 2ak}$, we have
\begin{eqnarray}
(1-\delta_K) \norm{ \vh_{T\cup S_0\cup S_1}}^2  &\le&   \norm{A \vh_{T\cup S_0\cup S_1}}^2  \notag\\
&=& \nj{A \vh_{T\cup S_0\cup S_1}, A\vh}-\nj{A \vh_{T\cup S_0\cup S_1},  \sum_{j\ge 2} A \vh_{S_j} }  \notag\\
&\le & \sqrt{1+\delta_K} \norm{ \vh_{T\cup S_0\cup S_1}} \norm{A\vh} +\sqrt 2 \delta_{K} \norm{ \vh_{T\cup S_0\cup S_1}}    \sum_{j\ge 2} \norm{\vh_{S_j}} \label{eq:AhTS1225}.
\end{eqnarray}
Putting \eqref{eq:hsjsum111} and \eqref{eq:lamres} into \eqref{eq:AhTS1225} and noting that $\norm{ \vh_{T\cup S_0}}  \le \norm{ \vh_{T\cup S_0\cup S_1}}$, we have
\begin{eqnarray*}
&& \xkh{1-\delta_K-3\sqrt2 \delta_K \sqrt{\frac{1+k_T/k}{a}} } \norm{ \vh_{T\cup S_0\cup S_1}}  \\
&\le &  \sqrt{1+\delta_K}  \sqrt{ 3\lambda \sqrt{k+k_T} \norm{\vh_{T \cup S_0}}+ 4\lambda \normone{\vxs_{T^c \cap S_0^c}} }+  \frac{4\sqrt2 \delta_K\normone{\vxs_{T^c \cap S_0^c}} }{\sqrt{ak}}
\end{eqnarray*}
Recall that  $\beta=1-\delta_K-3\sqrt2 \delta_K \sqrt{\frac{1+k_T/k}{a}}>0$.   Solving the above inequality gives
\[
 \norm{ \vh_{T\cup S_0\cup S_1}} \le   \frac{6(1+\delta_K) \lambda \sqrt{k+k_T}}{\beta^2} + \frac{2\sqrt{2\lambda }}{\beta} \sqrt{\normone{\vxs_{T^c \cap S_0^c}} }+\frac{8\delta_K}{\beta \sqrt{ak}} \normone{\vxs_{T^c \cap S_0^c}}.
 \]
This completes the proof.

\end{proof}

\section{Proximal HPP algorithm for solving tail-lasso problems}
In this section, we introduce a fast proximal alternating minimization algorithm for solving the tail-lasso problems \eqref{eq:mod2} based on Hadamard product parametrization (HPP).  The convergence analysis is provided for \eqref{eq:mod2}. The linear rate of convergence of the algorithm is also established.

\subsection{The tail-HPP algorithm for tail-lasso}

Recall the tail-lasso problem is to find the solution of:
\begin{equation} \label{eq:orhpp}
 \min_{\vz\in \R^n} \qquad  f(\vz):=\frac12\norm{A\vz-\vy}^2+ \lambda\normone{\vz_{T^c}},
\end{equation}
where $\lambda>0$ is a parameter which depends on the desired level of sparsity. 
Its associated over-parameterized model is
\begin{equation} \label{eq:hpp}
 \min_{\vu, \vv \in \R^n} \qquad  g(\vu,\vv):=\frac12  \norm{A(\vu \circ \vv) -\vy  }^2+  \frac{\lambda}2 \xkh{\norm{\vu_{T^c}}^2+\norm{\vv_{T^c}}^2},
\end{equation}
where $ \vu \circ \vv$ denotes the Hadamard (element-wise) product between $\vu$ and $\vv$.  Problem \eqref{eq:hpp} was referred to as the  tail-Hadamard product parametrization of \eqref{eq:orhpp} in \cite{Guangxi}, and the idea was deeply inspired by the recent works \cite{hoff2017,poon2023}.  Although the loss function $g$ is non-convex, it is however differentiable, and its local minimizers can be found using a fast proximal alternating minimization algorithm. Furthermore, as we will show later, for any local minimizer $(\hu,\hv)$ of $g(\vu,\vv)$,  the point $\z:=\hu\circ\hv$ is the global minimizer of $f(\vz)$. Based on this, it suffices to find a local minimizer of $g(\vu,\vv)$.  The following lemma, originally presented in \cite{Guangxi} with a proof adapted from \cite{hoff2017}, is provided here with a more intuitive proof.

\begin{lemma} \label{le:sameinf}
For any matrix $A\in \R^{m\times n}$ and any vector $\vy \in \R^m$, let $f(\vz)$ and $g(\vu,\vv)$ be defined as \eqref{eq:orhpp} and \eqref{eq:hpp}, respectively. Then $\inf_{\vz}~f(\vz)=\inf_{\vu,\vv}~g(\vu,\vv)$.
\end{lemma}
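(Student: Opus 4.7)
The plan is to prove the equality by establishing the two inequalities $\inf g \leq \inf f$ and $\inf f \leq \inf g$ separately, using a sharp AM-GM style correspondence between the $\ell_1$ penalty on $\vz_{T^c}$ and the quadratic penalty on $(\vu_{T^c},\vv_{T^c})$.

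For the direction $\inf_{\vu,\vv} g(\vu,\vv) \le \inf_{\vz} f(\vz)$, I would fix an arbitrary $\vz\in\R^n$ and exhibit a lift $(\vu,\vv)$ with $\vu\circ\vv=\vz$ that makes the AM-GM bound tight on $T^c$. The natural choice is $u_i = \sgn(z_i)\sqrt{|z_i|}$ and $v_i = \sqrt{|z_i|}$ for $i\in T^c$, which gives $u_iv_i=z_i$ and $u_i^2+v_i^2=2|z_i|$; for $i\in T$, choose any factorization such as $u_i=z_i$, $v_i=1$, since the regularizer ignores these entries. Then $A(\vu\circ\vv)=A\vz$ and $\tfrac{1}{2}(\|\vu_{T^c}\|^2+\|\vv_{T^c}\|^2)=\|\vz_{T^c}\|_1$, so $g(\vu,\vv)=f(\vz)$. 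Taking infimum over $\vz$ yields the inequality.

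For the reverse direction $\inf_{\vz} f(\vz) \le \inf_{\vu,\vv} g(\vu,\vv)$, I would fix any $(\vu,\vv)$ and set $\vz:=\vu\circ\vv$. The elementwise AM-GM inequality $|u_iv_i|\le \tfrac12(u_i^2+v_i^2)$ gives
\begin{equation*}
\|\vz_{T^c}\|_1 \;=\; \sum_{i\in T^c}|u_iv_i| \;\le\; \tfrac12\bigl(\|\vu_{T^c}\|^2+\|\vv_{T^c}\|^2\bigr),
\end{equation*}
while the data-fidelity term matches exactly since $A\vz = A(\vu\circ\vv)$. Hence $f(\vz)\le g(\vu,\vv)$, and taking infimum over $(\vu,\vv)$ closes the loop.

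There is no real obstacle here; the only care needed is in the construction of the tight lift (handling signs on $T^c$ correctly so that $\vu\circ\vv=\vz$), and in noting that the indices in $T$ are unconstrained by the regularizer and can be factored in any convenient way. Combining the two inequalities gives $\inf f = \inf g$, which completes the proof of Lemma \ref{le:sameinf}.
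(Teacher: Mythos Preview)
Your proof is correct and follows essentially the same approach as the paper: both directions rest on the AM--GM inequality $|u_iv_i|\le\tfrac12(u_i^2+v_i^2)$ together with the explicit square-root lift $u_i=\sgn(z_i)\sqrt{|z_i|}$, $v_i=\sqrt{|z_i|}$ making it tight. Your version is marginally cleaner in that you work with arbitrary points and pass to infima directly, whereas the paper first argues that both infima are attained and then compares the minimizers; your route sidesteps that attainment discussion entirely.
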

\begin{proof}
We first prove that $\inf_{\vz}~f(\vz) \le \inf_{\vu,\vv}~g(\vu,\vv)$.   Note that
\[
g(\vu,\vv)=\frac12  \norm{A(\vu \circ \vv) -\vy  }^2+  \frac{\lambda}2 \xkh{\norm{\vu_{T^c}}^2+\norm{\vv_{T^c}}^2}
\]
 is a continuous function. We claim that there exists $(\hu,\hv) \in \R^n \times \R^n$ such that $\inf_{\vu,\vv}~g(\vu,\vv)=g(\hu,\hv)$.  Indeed,  for any $(\vu,\vv) \in \R^n \times \R^n $, if $A(\vu \circ \vv) \neq \bm 0$ then  $g(t \vu,t \vv) \to +\infty$ as $t \to \infty$. And if $A(\vu \circ \vv) = \bm 0$ then $g(t \vu,t \vv) \ge g(\vu, \vv) $ for all $t\ge 1$. Therefore, the minimum of $g(\vu, \vv)$ can be reached at some point $(\hu,\hv) \in \R^n \times \R^n$. This gives the claim.
Set $\z=\hu \circ \hv$.  Then we have
\begin{eqnarray*}
f(\z) &= & \frac12 \norm{ A (\hu \circ \hv) -\vy}^2 +  \lambda \normone{(\hu \circ \hv)_{T^c}} \\
&\le & \frac12 \norm{ A (\hu \circ \hv) -\vy}^2  +  \lambda \cdot \frac{\norm{\hu_{T^c}}^2+\norm{\hv_{T^c}}^2}{2} \\
&=& g(\hu,\hv),
\end{eqnarray*}
where the inequality comes from the fact that $ab\le \frac{a^2+b^2}{2}$ for any $a,b\in \R$. This implies $\inf_{\vz}~f(\vz) \le f(\z)  \le g(\hu,\hv) = \inf_{\vu,\vv}~g(\vu,\vv)$.

We next turn to show that $\inf_{\vu,\vv}~g(\vu,\vv) \le \inf_{\vz}~f(\vz)$. Similarly, we assume that $\inf_{\vz}~f(\vz)=f(\z)$ for some $\z\in \R^n$. Define the vectors $\hu,\hv\in \R^n$ as follows:
\[
\widehat{u}_j=\mbox{sign}(\widehat{z}_j)  \sqrt{|\widehat{z}_j|} \quad  \quad  \widehat{v}_j= \sqrt{|\widehat{z}_j|}  \qquad \mbox{for all} \qquad j=1,\ldots, n.
\]
Observe that $\hu\circ \hv=\z$ and $\abs{\widehat{u}_j}^2+\abs{ \widehat{v}_j }^2 =2\abs{\widehat{z}_j}$. It then gives
\begin{eqnarray*}
g(\hu,\hv) &=&  \frac12  \norm{ A \z -\vy}^2 + \frac{\lambda}2 \xkh{\norm{\hu_{T^c}}^2+\norm{\hv_{T^c}}^2} =f(\z).
\end{eqnarray*}
Therefore, we have $\inf_{\vu,\vv}~g(\vu,\vv) \le  g(\hu,\hv)=f(\z) = \inf_{\vz}~f(\vz)$. This completes the proof.
\end{proof}

Next, we establish a correspondence between the minimizers of $f$ and $g$. Additionally, we prove that  the loss function $g$  exhibits a benign geometric landscape, meaning that all local minimizers are global minimizers and each saddle point has negative curvature direction. These favorable properties will facilitate finding globally optimal solution of $g$.

\begin{theorem} \label{th:geo}
Let $f(\vz)$ and $g(\vu,\vv)$ be defined as \eqref{eq:orhpp} and \eqref{eq:hpp}, respectively. If $(\hu,\hv)$ is a local minimizer of $g(\vu,\vv)$, then $\z:=\hu\circ\hv$ is the global minimizer of $f(\vz)$. Furthermore, the function $g(\vu,\vv)$ has no spurious local minima, i.e., all local minimizers are global minimizers and each saddle point of $g(\vu,\vv)$ has a negative curvature direction.
\end{theorem}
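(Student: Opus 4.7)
The plan is to use the first- and second-order optimality conditions for the smooth function $g$ to recover the subdifferential condition $0\in\partial f(\z)$ for the convex function $f$. Setting $\bm{r}:=A(\vu\circ\vv)-\vy$ and $\va:=A^\T(A\z-\vy)$ for $\z:=\hu\circ\hv$, a direct calculation gives
$$\nabla_{\vu} g(\vu,\vv) = \vv \circ (A^\T \bm{r}) + \lambda P_{T^c}\vu, \qquad \nabla_{\vv} g(\vu,\vv) = \vu \circ (A^\T \bm{r}) + \lambda P_{T^c}\vv,$$
where $P_{T^c}$ denotes the coordinate projection onto $T^c$. Writing $a_i$ for the $i$-th entry of $\va$, the critical-point equations decouple coordinatewise: for $i\in T$ one obtains $\hu_i a_i=\hv_i a_i=0$, and for $i\in T^c$ one obtains $\hv_i a_i+\lambda \hu_i=0=\hu_i a_i+\lambda \hv_i$. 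A short case split on $T^c$ shows that whenever $\widehat{z}_i\ne 0$ one has $|\hu_i|=|\hv_i|$ and $a_i=-\lambda\sgn(\widehat{z}_i)$, whereas $\widehat{z}_i=0$ forces $\hu_i=\hv_i=0$. In particular $|\hu_i|=|\hv_i|$ on $T^c$ at every critical point of $g$, and applying $ab\le(a^2+b^2)/2$ coordinatewise (with equality precisely when $|a|=|b|$) yields the energy identity $g(\hu,\hv)=f(\z)$.

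Next I would exploit $\nabla^2 g(\hu,\hv)\succeq 0$ to pin down the coordinates at which $\hu_i=\hv_i=0$. Direct differentiation produces the Hessian quadratic form
$$\langle(\bm{\xi},\bm{\eta}),\nabla^2 g(\hu,\hv)(\bm{\xi},\bm{\eta})\rangle = \|A(\hv\circ\bm{\xi}+\hu\circ\bm{\eta})\|^2 + 2\sum_{i} a_i\xi_i\eta_i + \lambda\|P_{T^c}\bm{\xi}\|^2 + \lambda\|P_{T^c}\bm{\eta}\|^2.$$
Testing with $(\bm{\xi},\bm{\eta})=(\bm{e}_i,\pm\bm{e}_i)$ at an index $i$ where $\hu_i=\hv_i=0$ annihilates the $A$-term and reduces the form to $\pm 2a_i+2\lambda\1_{T^c}(i)$. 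Requiring non-negativity for both signs forces $a_i=0$ for $i\in T$ and $|a_i|\le\lambda$ for $i\in T^c$. Combined with the first-order conditions, these inequalities are precisely $0\in\partial f(\z)$, so convexity of $f$ gives that $\z$ is a global minimizer of $f$. Then $g(\hu,\hv)=f(\z)=\min f=\inf_{\vu,\vv} g(\vu,\vv)$ by Lemma \ref{le:sameinf}, whence $(\hu,\hv)$ is in fact a global minimizer of $g$; this settles both the first claim and the absence of spurious local minima.

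For the saddle-point claim I would reverse the implication. Let $(\hu,\hv)$ be a critical point of $g$ that is not a global minimizer. The first-order analysis already delivers $a_i=0$ for every $i\in T$ with $(\hu_i,\hv_i)\ne(0,0)$ and $a_i=-\lambda\sgn(\widehat{z}_i)$ for every $i\in T^c$ with $\widehat{z}_i\ne 0$, so any failure of $0\in\partial f(\z)$ must be witnessed by some index $i$ with $\hu_i=\hv_i=0$ for which either $a_i\ne 0$ (if $i\in T$) or $|a_i|>\lambda$ (if $i\in T^c$). Choosing the sign in $(\bm{e}_i,\pm\bm{e}_i)$ so that $\pm 2a_i+2\lambda\1_{T^c}(i)<0$ then produces a strict negative-curvature direction for $\nabla^2 g(\hu,\hv)$, exhibiting $(\hu,\hv)$ as a strict saddle.

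The main obstacle will be the coordinatewise bookkeeping in the first-order case analysis on $T^c$: verifying cleanly that $|\hu_i|=|\hv_i|$ at every critical point (so that the identity $g(\hu,\hv)=f(\z)$ comes for free) and that every violation of $0\in\partial f(\z)$ must originate at a coordinate with $\hu_i=\hv_i=0$, so that the tiny test direction $(\bm{e}_i,\pm\bm{e}_i)$ is always available. Once this structural picture is secured, convexity of $f$ does all the heavy lifting in promoting a second-order critical point of $g$ to a global minimizer.
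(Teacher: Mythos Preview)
Your proposal is correct and follows essentially the same route as the paper: compute $\nabla g$ and $\nabla^2 g$, use the first-order conditions coordinatewise to obtain $|\hu_i|=|\hv_i|$ on $T^c$ and the sign relation $a_i=-\lambda\sgn(\widehat z_i)$ when $\widehat z_i\ne 0$, then use second-order nonnegativity at indices with $\hu_i=\hv_i=0$ to complete the KKT system for $f$, and finally invoke Lemma~\ref{le:sameinf} together with the identity $g(\hu,\hv)=f(\z)$ to promote second-order stationarity of $g$ to global optimality. The only cosmetic differences are that the paper extracts the constraints at vanishing coordinates via the eigenvalues of the diagonal submatrix $[\nabla^2 g]_{S^c,S^c}$ rather than your explicit test directions $(\bm e_i,\pm\bm e_i)$, and the paper phrases the strict-saddle claim as the contrapositive (second-order critical $\Rightarrow$ global) rather than your direct construction of a negative-curvature direction; both pairs of arguments are equivalent.
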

\begin{proof}
Note that the function $g(\vu,\vv)$ is twice differentiable.  Therefore,  all local minimizers and saddle points that do not have a direction of negative curvature should obey
\begin{equation} \label{eq:locsad}
\nabla g(\vu,\vv) =\bm{0} \qquad \mbox{and} \qquad \nabla^2 g(\vu,\vv) \succeq \bm{0}.
\end{equation}
With this fact in place, we first show that for any point $(\hu,\hv) \in \R^n \times \R^n$ obeying \eqref{eq:locsad},  the point $\z:=\hu\circ\hv$ must be a global minimizer of $f(\vz)$. To this end, since $f(\vz)$ is a convex function, by the Karush-Kuhn-Tucher (KKT) conditions, we know that the vector $\z$ is a global minimizer of $f$ if the following holds:
\begin{subequations}
\begin{gather}
 \gamma(\z)_j =0,  \qquad  \mbox{for} \quad j\in T;  \label{eq:kkt1}\\
\gamma(\z)_j+ \mbox{sign}(\widehat{z}_j) \cdot \lambda =0,  \quad  \mbox{for} \quad j\in T^c, ~\widehat{z}_j\neq 0; \label{eq:kkt2}\\
\abs{ \gamma(\z)_j}   \le  \lambda , \qquad  \mbox{for} \quad j\in T^c, ~\widehat{z}_j= 0. \label{eq:kkt3}
\end{gather}
\end{subequations}
where $\gamma(\z)=A^\T (A \z - \vy) \in \R^n$.  For the function $g(\vu,\vv)$,  a  simple calculation gives the gradient
\begin{equation} \label{eq:grad}
\nabla g(\vu,\vv)=\left[
\begin{array}{l}
(A^\T A) \circ (\vv\vv^\T) \vu -(A^\T \vy)\circ \vv+\lambda\vu_{T^c}\\
(A^\T A) \circ (\vu\vu^\T) \vv -(A^\T \vy)\circ \vu+\lambda\vv_{T^c}
\end{array} \right]= \left[
\begin{array}{l}
\gamma(\vu,\vv) \circ \vv+\lambda\vu_{T^c}\\
\gamma(\vu,\vv) \circ \vu+\lambda\vv_{T^c}
\end{array} \right]
\end{equation}
and the Hessian matrix
\begin{equation} \label{eq:Hessian}
\nabla^2 g(\vu,\vv)=\left[
\begin{array}{ll}
(A^\T A) \circ (\vv\vv^\T) +\lambda \mbox{Diag}(\1_{T^c}) & (A^\T A) \circ (\vv\vu^\T) +  \mbox{Diag}(\gamma(\vu \circ \vv)) \\
(A^\T A) \circ (\vu\vv^\T) +  \mbox{Diag}(\gamma(\vu \circ \vv)) & (A^\T A) \circ (\vu\vu^\T) +\lambda \mbox{Diag}(\1_{T^c})
\end{array} \right],
\end{equation}
where $\gamma(\vu,\vv)=A^\T \xkh{A(\vu \circ \vv)- \vy}$.  Therefore, for any $(\hu,\hv)$ obeying \eqref{eq:locsad},  it comes from \eqref{eq:grad} that  for all $j\in T$, it holds
\begin{equation} \label{eq:gaulada}
\gamma(\hu,\hv)_j \widehat{v}_j =0, \quad  \quad \gamma(\hu,\hv)_j \widehat{u}_j =0,
\end{equation}
and for all $j\in T^c$, one has
\begin{equation} \label{eq:gaula}
\gamma(\hu,\hv)_j \widehat{v}_j+\lambda \widehat{u}_j=0, \quad \quad  \gamma(\hu,\hv)_j \widehat{u}_j+\lambda \widehat{v}_j=0.
\end{equation}
We next divide the proof into the following three cases:
\begin{itemize}
\item[(i)] $j\in T$ and $\widehat{u}_j \neq 0$ or $\widehat{v}_j \neq 0$.  From \eqref{eq:gaulada}, we have $ \gamma(\hu,\hv)_j =0$. Recall that $\z:=\hu\circ\hv$.  Therefore,  the equation \eqref{eq:kkt1} holds for the case where $\widehat{u}_j \neq 0$ or $\widehat{v}_j \neq 0$.
\item[(ii)]  $j\in T^c$ and $\widehat{u}_j \neq 0$ or $\widehat{v}_j \neq 0$. Due to the fact that $\lambda >0$, it  follows from  \eqref{eq:gaula} that for all $j\in T^c$, if $\widehat{u}_j  \neq 0$ then $\widehat{v}_j \neq 0$, and vise verse. Therefore, we have
\[
\gamma(\hu,\hv)_j = -\lambda \frac{ \widehat{u}_j}{ \widehat{v}_j}=-\lambda \frac{ \widehat{v}_j}{ \widehat{u}_j}.
\]
This gives that
\[
\abs{\widehat{u}_j}=\abs{\widehat{v}_j} \quad \mbox{and} \quad \gamma(\hu,\hv)_j = -\lambda \frac{\mbox{sign}( \widehat{u}_j)}{\mbox{sign}( \widehat{v}_j)}= -\lambda \mbox{sign}( \widehat{u}_j  \widehat{v}_j)= -\lambda  \mbox{sign}(\widehat{z}_j).
\]
And so the equation \eqref{eq:kkt2} holds.
\item[(iii)] $\widehat{u}_j = \widehat{v}_j = 0$.  Define the set $S_{\hu,\hv}=\supp(\hu)\cup \supp(\hv) \subset [n]$. Since $\nabla^2 g(\hu,\hv) \succeq \bm{0}$, it then comes from \eqref{eq:Hessian} that  the sub-matrix obeys
\[
\zkh{\nabla^2 g(\hu,\hv)}_{S_{\hu,\hv}^c, S_{\hu,\hv}^c}= \left[
\begin{array}{ll}
\lambda \mbox{Diag}(\1_{T^c \cap S_{\hu,\hv}^c}) &  \mbox{Diag}(\gamma(\vu \circ \vv)_{S_{\hu,\hv}^c}) \\
 \mbox{Diag}(\gamma(\vu \circ \vv)_{S_{\hu,\hv}^c}) & \lambda \mbox{Diag}(\1_{T^c \cap S_{\hu,\hv}^c})
\end{array} \right] \succeq \bm{0}.
\]
 A simple calculation gives that the eigenvalues of the sub-matrix are $\pm \gamma(\vu \circ \vv)_{j}$ when $j\in T\cap S_{\hu,\hv}^c$ and $\lambda \pm \gamma(\vu \circ \vv)_{j}$ when $j\in T^c \cap S_{\hu,\hv}^c$. Therefore, $\zkh{\nabla^2 g(\hu,\hv)}_{S_{\hu,\hv}^c, S_{\hu,\hv}^c} \succeq \bm{0}$ means that $\gamma(\vu \circ \vv)_{j}=0$ for all $j\in T$ with $\widehat{u}_j =\widehat{v}_j = 0$; and $ \abs{\gamma(\vu \circ \vv)_{j}} \le \lambda$ for all $j\in T^c$ with $\widehat{u}_j =\widehat{v}_j = 0$.
\end{itemize}
 In summary, we can see that if $(\hu,\hv)$ obeying \eqref{eq:locsad} then the point $\z:=\hu\circ\hv$ will obey \eqref{eq:kkt1}, \eqref{eq:kkt2}, and \eqref{eq:kkt3}, which means $\hu\circ\hv$ is a  global minimizer of $f$.

We next turn to prove the second part where the function $g(\vu,\vv)$ has no spurious local minima. To this end,  it suffices to show that  if $(\hu,\hv)$ obeying \eqref{eq:locsad} then it must be a global minimizer of $g(\vu,\vv)$. Since we have proved that if $(\hu,\hv)$ obeying \eqref{eq:locsad}, then $\hu \circ \hv$ is the global minimizer of $f$ and $\abs{\widehat{u}_j}=\abs{\widehat{v}_j}$ for all $j\in T^c$. With this in hand,
we have
\[
g(\hu,\hv)=f(\hu \circ \hv)=\inf_{\vz}~f(\vz)=\inf_{\vu,\vv}~g(\vu,\vv),
\]
where the last equality comes from Lemma \ref{le:sameinf}. This implies that $(\hu,\hv)$ a global minimizer of $g(\vu,\vv)$, which completes the proof.
\end{proof}

\subsection{A new proximal HPP algorithm for tail-Lasso}
As shown in Theorem \ref{th:geo}, the loss function $g(\vu,\vv)$ given in  \eqref{eq:hpp} has no spurious local minima. Therefore, the gradient method with random initialization will converge to a global minimizer of $g$ with high probability \cite{lee2019,jin2017}. However, the rate of convergence is unclear, and the gradient descent may take exponential time to escape saddle points \cite{du2017}. In this paper, we use the proximal alternating minimization method to solve \eqref{eq:hpp}. Compared to gradient-based algorithms, proximal alternating minimization not only converges quickly in practice, but is also easier to implement since the subproblems typically have closed-form solutions.  The proximal alternating minimization for solving $g(\vu,\vv)$ is summarized in Algorithm \ref{al:1}.

\begin{algorithm}[H]
\caption{Proximal Hadamard Product Parametrization (PHPP)}
\label{al:1}
\begin{algorithmic}[H]
\Require
Initial $(\vu_0,\vv_0) \in \R^n\times \R^n$; Estimated support $T\subset [n]$;
Parameter $\lambda, \alpha>0$; Maximum number of iterations $T_{\max}$.   \\
\textbf{for} $k=0,1,\ldots, T_{\max}$ \quad \textbf{do}
\begin{equation} \label{al1:1}
\vu_{k+1}= \mbox{argmin}_{\vu\in \R^n} \quad g(\vu,\vv_k)+\frac{1}{2\alpha} \norm{\vu-\vu_k}^2
\end{equation}
and
\begin{equation} \label{al1:2}
\vv_{k+1}=\mbox{argmin}_{\vv\in \R^n} \quad g(\vu_{k+1},\vv)+\frac{1}{2\alpha} \norm{\vv-\vv_k}^2
\end{equation}
\textbf{end for}
\Ensure
$\vx_{T_{\max}}:=\vu_{T_{\max}} \circ \vv_{T_{\max}}$.\\
\end{algorithmic}
\end{algorithm}

Referring to the definition of $g(\vu,\vv)$ given in \eqref{eq:hpp}, it is straightforward to observe that the minimizers of \eqref{al1:1} and \eqref{al1:2} admit closed-form expressions:
\begin{equation} \label{updateu}
\vu_{k+1}=\xkh{(A^\T A) \circ (\vv_k \vv_k^\T) + \lambda I_{T^c} + \frac1{\alpha} I_n }^{-1} \xkh{(A^\T \vy)\circ \vv_k+\frac1{\alpha} \vu_k}
\end{equation}
and
\begin{equation} \label{updatev}
\vv_{k+1}=\xkh{(A^\T A) \circ (\vu_{k+1} \vu_{k+1}^\T) + \lambda I_{T^c} + \frac1{\alpha} I_n }^{-1} \xkh{(A^\T \vy)\circ \vu_{k+1}+\frac1{\alpha} \vv_k}.
\end{equation}
Here, $I_n \in \R^{n\times n}$ is the identity matrix, and $I_{T^c}$ denotes a diagonal matrix where $(I_{T^c})_{j,j}=0$ for $j\in T$ and $(I_{T^c})_{j,j}=1$ for $j\in T^c$.  Although Algorithm \ref{al:1} is straightforward to implement, it requires two matrix inversions per iteration, making it increasingly computationally expensive as the dimension $n$ grows. However, by leveraging the structure of the PHPP, we can significantly reduce the computational burden through some modifications in sparse high-dimensional setting. Specifically,
denote $S_k=\supp(\vv_k)$. Then the update \eqref{updateu} can be written as
\begin{equation} \label{eq:upusk}
\vu_{k+1, S_k}=\xkh{(A_{S_k}^\T A_{S_k}) \circ (\vv_{k,S_k} \vv_{k,S_k}^\T) + \lambda I_{T^c \cap S_k} + \frac1{\alpha} (I_n)_{S_k} }^{-1} \xkh{(A_{S_k}^\T \vy)\circ \vv_{k,S_k}+\frac1{\alpha} \vu_{k,S_k}}
\end{equation}
and
\begin{equation} \label{eq:upuskc}
\vu_{k+1, j}= \left\{ \begin{array}{ll}  u_{k,j},   &  j\in T\cap S_k^c \vspace{1em}\\
   \frac{1}{\lambda\alpha+1} u_{k,j},  &  j\in T^c \cap S_k^c
   \end{array} \right. .
\end{equation}
 Here, $A_{S_k}$ is  the submatrix consisting of the columns of matrix  $A \in \R^{m\times n}$ indexed by $S_k$.  If the cardinality of $S_k$ is small, then PHPP update for $\vu_{k+1}$ can be computed efficiently. In practice, the set $S_k$ is set to be
\[
S_k=\dkh{j \in [n]:   |v_{k,j}|  \ge \tau },
\]
where $\tau>0$ is a predefined threshold.   Similarly, the update for $\vv_{k+1}$ is given by
\begin{equation} \label{eq:upuskv}
\vv_{k+1, \widetilde{S_k}}=\xkh{(A_{\widetilde{S_k}}^\T A_{\widetilde{S_k}}) \circ (\vu_{k+1,\widetilde{S_k}} \vu_{k+1,\widetilde{S_k}}^\T) + \lambda I_{T^c \cap \widetilde{S_k}} + \frac1{\alpha} (I_n)_{\widetilde{S_k}} }^{-1} \xkh{(A_{\widetilde{S_k}}^\T \vy)\circ \vu_{k+1,\widetilde{S_k}}+\frac1{\alpha} \vv_{k,\widetilde{S_k}}}
\end{equation}
and
\begin{equation} \label{eq:upuskcv}
\vv_{k+1, j}= \left\{ \begin{array}{ll}  v_{k,j},   &  j\in T\cap \widetilde{S_k^c} \vspace{1em}\\
   \frac{1}{\lambda\alpha+1} v_{k,j},  &  j\in T^c \cap \widetilde{S_k^c}
   \end{array} \right. .
\end{equation}
Here, $\widetilde{S_k}=\dkh{j \in [n]:   |u_{k+1,j}|  \ge \tau }$.  To improve the efficiency of PHPP in practice, the estimated support $T$ can be selected in an adaptive strategy where $T_0$ indexes the $k_0$ (normally $k_0=1$) largest components in magnitude of the initial point $\vz_0$, $T_1$ indexes the $k_0+k'$ (typically $k'=1$) largest  components of the second iteration $\vz_1$,  $T_2$ indexes the $k_0+2k'$ largest components of the third iteration $\vz_2$, and so on.  Based on the above analysis, the PHPP algorithm with a threshold $\tau>0$ is summarized in Algorithm \ref{al:2}. It is important to highlight that our algorithm does not require prior knowledge of the sparsity level $k$.

\begin{algorithm}[H]
\caption{Proximal Hadamard Product Parametrization: Improved Version}
\label{al:2}
\begin{algorithmic}[H]
\Require
Initial point $(\vu_0,\vv_0) \in \R^n\times \R^n$; Threshold $\tau>0$; Parameter $\lambda, \alpha, k_0, k'>0$; Maximum number of iterations $T_{\max}$.   \\
\textbf{Initialize:}  $\vz_0=\vu_0\circ \vv_0$, $k_T=k_0$ and $T_0=\supp\xkh{\mathbb{H}_{k_T}(\vz_0)}$.\\
\textbf{for} $k=0,1,\ldots, T_{\max}$,  \quad \textbf{do}
\begin{itemize}
\item[1.] Let $S_k=\dkh{j \in [n]:   |v_{k,j}|  \ge \tau }$.
\item[2.] Update $\vu_{k,S_k}$ according to \eqref{eq:upusk} with estimated support $T_k$.
\item[3.] Update $\vu_{k,S_k^c}$ according to \eqref{eq:upuskc} with estimated support $T_k$.
\item[4.] Let $\widetilde{S_k}=\dkh{j \in [n]:   |u_{k+1,j}|  \ge \tau }$.
\item[5.] Update $\vv_{k,\widetilde{S_k}}$ according to \eqref{eq:upuskv} with estimated support $T_k$.
\item[6.] Update $\vv_{k,\widetilde{S_k}^c}$ according to \eqref{eq:upuskcv} with estimated support $T_k$.
\item[7.] Set $\vz_k=\vu_k\circ \vv_k$ and $k_T=k_T+k'$. Update $T_{k+1}=\supp\xkh{\mathbb{H}_{k_T}(\vz_k)}$.
\end{itemize}
\textbf{end for}
\Ensure
$\vz_{T_{\max}}$.
\end{algorithmic}
\end{algorithm}

\subsection{Linear rate of convergence of PHPP}
In this section, we prove that Algorithm \ref{al:1} with random initialization converges linearly to a global solution of $g(\vu,\vv)$ with high probability.
We begin with a standard result demonstrating that the loss function $g(\vu_k,\vv_k)$ is decreasing, which plays a key role in the  convergence analysis.

\begin{lemma} \label{le:dest}
For any matrix $A \in \R^{m\times n}$, starting from any initial guess $(\vu_0,\vv_0)$, let $\dkh{(\vu_k,\vv_k)}_{k=1}^\infty$ are generated  by Algorithm \ref{al:1}.  Then, for all $k\ge 1$,  $(\vu_k,\vv_k)$ are well-defined and the following holds:
\[
g(\vu_{k+1},\vv_{k+1}) \le g(\vu_k,\vv_k)- \frac{1}{2\alpha} \norm{\vw_{k+1}-\vw_k}^2.
\]
Here, we denote $\vw_k:=(\vu_k,\vv_k)$ for all $k\ge 0$.
\end{lemma}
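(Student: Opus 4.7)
The plan is to exploit the two defining minimization properties of $\vu_{k+1}$ and $\vv_{k+1}$, together with the strong convexity contributed by the proximal term, in the usual manner for proximal alternating minimization.

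First I would verify well-definedness. For fixed $\vv_k$, the $\vu$-subproblem \eqref{al1:1} is a quadratic in $\vu$ whose Hessian is $(A^\T A)\circ(\vv_k\vv_k^\T)+\lambda\,\mathrm{Diag}(\1_{T^c})+\tfrac{1}{\alpha}I_n$, which is strictly positive definite because of the $\tfrac{1}{\alpha}I_n$ term added by the proximal regularization. Hence the objective is strongly convex and coercive in $\vu$, so $\vu_{k+1}$ exists and is unique; the closed-form expression \eqref{updateu} makes this explicit. The same argument applied to the $\vv$-subproblem \eqref{al1:2} with $\vu_{k+1}$ frozen shows $\vv_{k+1}$ is uniquely defined. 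An induction on $k$ then gives well-definedness of the whole sequence $\{(\vu_k,\vv_k)\}$.

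Next I would establish a half-step descent. Since $\vu_{k+1}$ is the minimizer of $\vu\mapsto g(\vu,\vv_k)+\tfrac{1}{2\alpha}\|\vu-\vu_k\|^2$, evaluating this objective at the feasible point $\vu_k$ gives
\[
g(\vu_{k+1},\vv_k)+\frac{1}{2\alpha}\norm{\vu_{k+1}-\vu_k}^2 \;\le\; g(\vu_k,\vv_k),
\]
so that $g(\vu_{k+1},\vv_k)\le g(\vu_k,\vv_k)-\tfrac{1}{2\alpha}\|\vu_{k+1}-\vu_k\|^2$. Applying the same argument to the $\vv$-subproblem at iterate $\vu_{k+1}$, and comparing against $\vv_k$, yields
\[
g(\vu_{k+1},\vv_{k+1}) \;\le\; g(\vu_{k+1},\vv_k)-\frac{1}{2\alpha}\norm{\vv_{k+1}-\vv_k}^2.
\]

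Finally I would chain the two inequalities. Adding them telescopes the middle term and produces
\[
g(\vu_{k+1},\vv_{k+1}) \;\le\; g(\vu_k,\vv_k)-\frac{1}{2\alpha}\xkh{\norm{\vu_{k+1}-\vu_k}^2+\norm{\vv_{k+1}-\vv_k}^2}.
\]
Since $\vw_k=(\vu_k,\vv_k)$, the parenthesized quantity is exactly $\|\vw_{k+1}-\vw_k\|^2$, giving the claimed bound. There is no serious obstacle in this lemma; the only subtlety is that $g(\cdot,\vv_k)$ may fail to be coercive when $\vv_k$ has zero entries on $T$, but the addition of the proximal quadratic immediately repairs both the coercivity and uniqueness issues, which is precisely why the scheme is written in proximal form.
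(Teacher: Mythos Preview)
Your proposal is correct and follows essentially the same route as the paper: the paper likewise argues that each subproblem is strongly convex and coercive thanks to the proximal term (hence $\vu_{k+1},\vv_{k+1}$ are uniquely defined), then derives the two half-step inequalities by comparing the minimizer to the previous iterate, and finally adds them to obtain the stated descent. Your extra remark about the Hessian and the potential non-coercivity of $g(\cdot,\vv_k)$ on $T$ is a nice elaboration but not a departure from the paper's argument.
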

\begin{proof}
For any $\alpha>0$ and $\vv_k \in \R^n$, the function $\vu\to g(\vu,\vv_k)+\frac{1}{2\alpha} \norm{\vu-\vu_k}^2$ is lower bounded, strongly convex and coercive.  Therefore, $\vu_{k+1}$ is uniquely defined. Similarly, $\vv_{k+1}$ is also well-defined and unique.  By the definition,  one has
\[
g(\vu_{k+1},\vv_k)+\frac{1}{2\alpha} \norm{\vu_{k+1}-\vu_k}^2 \le g(\vu_k,\vv_k)
\]
and
\[
g(\vu_{k+1},\vv_{k+1})+\frac{1}{2\alpha} \norm{\vv_{k+1}-\vv_k}^2 \le g(\vu_{k+1},\vv_k).
\]
Combining the above two inequalities, we have
\[
g(\vu_{k+1},\vv_{k+1}) \le g(\vu_k,\vv_k)- \frac{1}{2\alpha} \norm{\vu_{k+1}-\vu_k}^2-\frac{1}{2\alpha} \norm{\vv_{k+1}-\vv_k}^2.
\]
This completes the proof.
\end{proof}

To establish the linear rate of convergence of the Algorithm \ref{al:1}, the tool of the Kurdyka-{\L}ojasiewicz (KL) inequality is needed.

\begin{definition}\cite{lojasiewicz,attouch2010} \label{def:KL}
Let $f:\R^n\to \R$ be a differentiable function, and $\bar\vx$ be a critical point of $f$. We say the function $f$ satisfies the Kurdyka-{\L}ojasiewicz (KL) property at point $\bar\vx$ if there exist $\eta\in (0,+\infty)$, a neighborhood $U$ of $\bar\vx$ and a continuous concave function $\psi: [0,\eta) \to \R_+$ with $\psi(0)=0$ such that
\begin{itemize}
\item[(i)] $\psi$ is continuously differentiable on $(0,\eta)$ with $\psi'(t) >0$.
\item[(ii)] for any $\vx \in U$ with $0<f(\vx) < f(\bar\vx)+\eta$, it holds that
\begin{equation} \label{eq:pkl}
\psi'\xkh{f(\vx)-f(\bar\vx)} \norm{\nabla f(\vx)}  \ge 1.
\end{equation}
\end{itemize}
\end{definition}

Assume that $\psi(t)= c t^{1-\theta}$ for some $c>0$ and $\theta\in (0,1)$. Then \eqref{eq:pkl} can be equivalently written as
\begin{equation}  \label{eq:KLdef}
c(1-\theta) \norm{\nabla f(\vx)} \ge \xkh{f(\vx)-f(\bar\vx)}^\theta.
\end{equation}
Usually, the smallest $\theta$ satisfying \eqref{eq:KLdef} is called the {\L}ojasiewicz exponent of $f$ at critical point $\bar\vx$. The KL property plays a key role in the convergence analysis of first-order methods (see \cite{attouch2010, attouch2013}), and the exponent is closely related to the local convergence rates of these algorithms \cite{attouch2009}.

The KL property is known to be satisfied for a large variety of functions.  In 1963, {\L}ojasiewicz proved that real-analytic functions satisfy the KL property in a critical point with $\theta \in [1/2,1)$ \cite{lojasiewicz}. Recently,  Kurdyka extended this result to differentiable functions with $o$-minimal structure \cite{kurdyka}.  The extension of KL inequality to nonsmooth subanalytic functions can be found in \cite{bolte2007,bolte20072}. For our proof, we require that the function $g(\vu,\vv)$ satisfies KL property at any global minimizer with exponent $\theta=1/2$. The result is the following proposition.

\begin{prop} \label{pro:kl}
For any matrix $A \in \R^{m\times n}$ with $\mbox{rank}(A)=m$.  Let $(\vu^*,\vv^*)$ be a global minimizer of $g(\vu,\vv)$ in \eqref{eq:hpp}, and let  $\vz^*=\vu^* \circ \vv^*$. Assume that $\norms{\vz^*}_0 \le m$ and $-A^\T \xkh{ A\vz^* -  \vy} \in \mathrm{ri}  \xkh{\lambda \norms{\vz_{T^c}^*}_{1}} $. If the estimated support $T$ obeying $T\subset \mbox{supp}(\vz^*)$,   then there exists constants $C, \epsilon>0$ and an exponent $\theta=1/2$ such that it holds
\[
\norm{\nabla g(\vu,\vv)} \ge C \abs{g(\vu,\vv)- g(\vu^*,\vv^*) }^\theta, \qquad \forall (\vu,\vv) \in U(\vu^*,\vv^*).
\]
Here,  $U(\vu^*,\vv^*) $ is a ball at $(\vu^*,\vv^*)$ with radius $\epsilon$, and $\mathrm{ri} \xkh{\cdot}$ denotes the relative interior set.
\end{prop}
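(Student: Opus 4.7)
My plan is to establish the equivalent error-bound form $g(\vu,\vv) - g(\vu^*,\vv^*) \le C \|\nabla g(\vu,\vv)\|^2$ for all $(\vu,\vv)$ in a small ball about $(\vu^*,\vv^*)$, which is exactly the KL inequality with exponent $\theta = 1/2$. Write $S^* = \supp(\vz^*)$; by the rank and sparsity assumption $A_{S^*}$ is injective, and the relative-interior (strict complementarity) condition yields some $\eta > 0$ with $|\gamma(\vz^*)_j| \le \lambda - \eta$ for every $j \in T^c \setminus S^*$. From the proof of Theorem \ref{th:geo} one has $|u_j^*| = |v_j^*|$ for $j \in T^c$, hence $g(\vu^*,\vv^*) = f(\vz^*)$, and the natural decomposition
\[
g(\vu,\vv) - g(\vu^*,\vv^*) = \bigl(f(\vz) - f(\vz^*)\bigr) + \frac{\lambda}{2}\sum_{j\in T^c}(|u_j|-|v_j|)^2,\qquad \vz := \vu\circ\vv,
\]
splits the task into controlling the LASSO error of $f$ and a nonnegative residual.

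For the LASSO part I would invoke a Luo--Tseng type error bound for convex-composite objectives of the form $\tfrac12\|A\vz-\vy\|^2 + \lambda\|\vz_{T^c}\|_1$ (a convex quadratic plus a polyhedral regularizer): this yields $f(\vz) - f(\vz^*) \le C_f \cdot \mathrm{dist}(0,\partial f(\vz))^2$ locally. To connect this to $\nabla g$, I would relate $\partial f(\vz)$ to $\nabla g(\vu,\vv)$ coordinate-wise via the explicit formula \eqref{eq:grad}. For $j \in T$, the identity $(\nabla_u g)_j^2 + (\nabla_v g)_j^2 = \gamma(\vz)_j^2(u_j^2+v_j^2)$ together with $u_j^2 + v_j^2$ bounded below give $\gamma(\vz)_j^2 \le C_T[(\nabla_u g)_j^2+(\nabla_v g)_j^2]$. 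For $j \in T^c\cap S^*$, where both $|u_j^*|$ and $|v_j^*|$ are positive, combining $(\nabla_u g)_j$ and $(\nabla_v g)_j$ additively yields $(u_j+v_j)(\gamma(\vz)_j + \lambda \sgn(z_j))$ after matching signs, thereby controlling the corresponding subgradient component. For $j \in T^c \setminus S^*$ I would use the pivotal identity
\[
u_j(\nabla_u g)_j + v_j(\nabla_v g)_j = 2\gamma(\vz)_j z_j + \lambda(u_j^2+v_j^2),
\]
combined with $|\gamma(\vz)_j| \le \lambda - \eta/2$ near $\vz^*$ and $2|z_j|\le u_j^2+v_j^2$, to deduce $u_j^2+v_j^2 \le \frac{4}{\eta^2}\bigl[(\nabla_u g)_j^2+(\nabla_v g)_j^2\bigr]$; this single estimate simultaneously bounds the residual $(|u_j|-|v_j|)^2$ on $T^c \setminus S^*$ and the off-support contribution to $\mathrm{dist}(0,\partial f(\vz))$, since $|z_j|$ is then quadratically small in $\|\nabla g\|$.

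The residual contribution from $j \in T^c \cap S^*$ is what I expect to be the main obstacle: here $u_j$ and $v_j$ are bounded away from $0$, so strict complementarity offers no smallness, yet $(|u_j|-|v_j|)^2$ must still be absorbed into $\|\nabla g\|^2$. A second-order expansion around $(\vu^*,\vv^*)$ shows $(|u_j|-|v_j|)^2 = (\sgn(u_j^*)\delta u_j - \sgn(v_j^*)\delta v_j)^2 + O(\|\delta\|^3)$, and I would control it by analyzing the block of $\nabla^2 g(\vu^*,\vv^*)$ transverse to the critical manifold $\{(\vu,\vv):\vu\circ\vv=\vz^*,\ |u_j|=|v_j|~\text{for all}~j\in T^c\}$, using the injectivity of $A_{S^*}$ and the positivity $|u_j^*|=|v_j^*|>0$ to show this transverse block is strictly positive definite. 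Assembling the three coordinate-wise estimates with the LASSO KL bound then delivers $g - g^* \le C\|\nabla g\|^2$ throughout a ball around $(\vu^*,\vv^*)$, which is the asserted KL inequality with exponent $1/2$.
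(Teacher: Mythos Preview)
Your route differs substantially from the paper's. The paper argues directly that $\nabla^2 g(\vu^*,\vv^*)$ is positive definite: it splits the Hessian into a block $B_1$ whose quadratic form is $\|A(\vx_1\circ\vv^* + \vx_2\circ\vu^*)\|_2^2$ and two diagonal blocks $B_2,B_3$ carrying the $\lambda$- and $\gamma$-terms on $\hS := \supp(\vz^*)$ and on $\hS^c$ respectively, then combines injectivity of $A_{\hS}$ with strict complementarity on $T^c\cap\hS^c$. Once $c_1 := \lambda_{\min}\bigl(\nabla^2 g(\vu^*,\vv^*)\bigr) > 0$ is in hand, a second-order Taylor expansion of $g$ and of $\nabla g$ about $(\vu^*,\vv^*)$ immediately yields $|g-g^*| \le (c_2+1)\|\vw-\vw^*\|_2^2$ and $\|\nabla g\| \ge \tfrac{c_1}{2}\|\vw-\vw^*\|_2$, which combine to give the KL inequality with exponent $1/2$. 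No Luo--Tseng machinery or subdifferential calculus for $f$ is invoked.

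Your decomposition-plus-error-bound program, by contrast, has a concrete gap at the off-support coordinates $j \in T^c \setminus S^*$. For such $j$ one has $u_j^* = v_j^* = 0$, so on any punctured neighbourhood of $(\vu^*,\vv^*)$ the product $z_j = u_j v_j$ is generically nonzero. But whenever $z_j \neq 0$ the set $[\partial f(\vz)]_j$ collapses to the singleton $\{\gamma(\vz)_j + \lambda\,\sgn(z_j)\}$, whose magnitude is bounded \emph{below} by $\lambda - |\gamma(\vz)_j| \ge \eta/2$ from strict complementarity. Hence $\mathrm{dist}(0,\partial f(\vz))$ does \emph{not} tend to zero as $(\vu,\vv) \to (\vu^*,\vv^*)$ and cannot be dominated by $\|\nabla g(\vu,\vv)\|$; your remark ``since $|z_j|$ is then quadratically small'' correctly bounds $|z_j|$ but addresses the wrong quantity. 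A repair is to replace the subgradient distance by the proximal residual $R(\vz) = \vz - \mathrm{prox}_{\lambda\|\cdot\|_{1,T^c}}\!\bigl(\vz - \gamma(\vz)\bigr)$: for off-support $j$ with $|z_j - \gamma_j| \le \lambda$ one has $[R(\vz)]_j = z_j$, and then your estimate $u_j^2 + v_j^2 \le \tfrac{4}{\eta^2}\bigl[(\nabla_u g)_j^2 + (\nabla_v g)_j^2\bigr]$ does control it. This, however, requires you to rework the Luo--Tseng step so as to deliver $f(\vz) - f(\vz^*) \le C\|R(\vz)\|_2^2$ rather than the subgradient-distance version you quoted.
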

\begin{remark}
Set $h(\vz):=\frac12\norm{A\vz-\vy}^2$. Then the functions $f$ and $g$ defined in \eqref{eq:orhpp} and \eqref{eq:hpp} can be rewritten as  $f(\vz)= h(\vz)+ \lambda\normone{\vz_{T^c}}$ and  $g(\vu,\vv)=h(\vu\circ \vv)+ \frac{\lambda}2 \xkh{\norm{\vu_{T^c}}^2+\norm{\vv_{T^c}}^2}$. To guarantee the KL property of $g$ with exponent $1/2$, as shown in Proposition \ref{pro:kl},  one requires the condition $-\nabla h(\vz^*) \in \mathrm{ri}  \xkh{\lambda \norms{\vz_{T^c}^*}_{1}} $. Such condition is also adopted in  \cite{Ouyang} for establishing KL property and  is referred to as the strict complementarity condition for $f$.  We conjecture that when the strict complementarity condition fails,  the result in Proposition \ref{pro:kl} does not hold. Another requirement is $T\subset \mbox{supp}(\vz^*)$. For the classical Lasso where $T=\emptyset$, it holds trivially. For our tail-Lasso, as we stated in Remark \ref{re:28},  if $T$ is selected using an adaptive strategy, then  after a few iterations, the set $T$ converges towards   $\mbox{supp}(\vz^*)$.
\end{remark}

\begin{proof}
Recall that the function  $g(\vu,\vv)$ is  a twice differentiable function. Let
\[
c_1=\lambda_{\mbox{min}} \xkh{\nabla^2 g(\vu^*,\vv^*)} \qquad \mbox{and} \qquad  c_2=\lambda_{\mbox{max}} \xkh{\nabla^2 g(\vu^*,\vv^*)}.
\]
Here, $\lambda_{\min}(\cdot)$ and $\lambda_{\max}(\cdot)$ denote the smallest and largest eigenvalues of a matrix.
We claim that the Hessian $\nabla^2 g(\vu^*,\vv^*)$ is positive definite, i.e., $c_1>0$. For convenience, denote $\vw=(\vu,\vv)$ and $\vw^*=(\vu^*,\vv^*) $. Since $\vw^*$ is a critical point of $g(\vz)$,  therefore  the Taylor expansion of $g$ at $\vw^*$ is
\[
g(\vw)- g(\vw^*) = (\vw-\vw^*)^\T \nabla^2 g(\vw^*) (\vw-\vw^*) +o(\norm{\vw-\vw^*}^2).
\]
It means that there exists $\epsilon_1>0$ such that for all $\vw$ obeying $\norm{\vw-\vw^*} \le \epsilon_1$, it holds
\begin{equation} \label{eq:KLlow}
\abs{g(\vw)- g(\vw^*)}  \le \xkh{c_2+1} \norm{\vw-\vw^*}^2.
\end{equation}
For the gradient $\nabla g(\vw)$, applying the Taylor expansion yields
\[
\nabla g(\vw) = \nabla^2 g(\vw^*) (\vw-\vw^*) + o(\norm{\vw-\vw^*}).
\]
Similarly, there exists $\epsilon_2>0$ such that all $\vw$ obeying $\norm{\vw-\vw^*} \le \epsilon_2$, it holds
\begin{eqnarray}
\norm{\nabla g(\vw) } & \ge &  \norm{ \nabla^2 g(\vw^*) (\vw-\vw^*) } - \frac{c_1}2 \norm{\vw-\vw^*} \notag \\
&\ge & \frac{c_1}2 \norm{\vw-\vw^*}, \label{eq:KLup}
\end{eqnarray}
where the last inequality comes from the fact that
\[
\norm{ \nabla^2 g(\vw^*) (\vw-\vw^*) } \ge  \lambda_{\mbox{min}} \xkh{\nabla^2 g(\vw^*)} \norm{\vw-\vw^*}.
\]
  Taking $\epsilon=\min(\epsilon_1,\epsilon_2)$ and  combining \eqref{eq:KLlow} and \eqref{eq:KLup}, we obtain that for  all $\vw$ obeying $\norm{\vw-\vw^*} \le \epsilon$, it holds
\[
\norm{\nabla g(\vw) } \ge C \abs{g(\vw)- g(\vw^*)}^{\frac12},
\]
where $C=\frac{c_1}{2\sqrt{c_2+1}}$. This gives the conclusion.

It remain to prove the claim that $c_1>0$. From \eqref{eq:Hessian}, the Hessian matrix of $g$ at $(\vu^*,\vv^*)$ is
\begin{equation*} \label{eq:Hessian1}
\nabla^2 g(\vu^*,\vv^*)=\left[
\begin{array}{cc}
(A^\T A) \circ (\vv^* \vv^{*\T})  & (A^\T A) \circ (\vv^* \vu^{*\T})  \\
(A^\T A) \circ (\vu^* \vv^{*\T}) & (A^\T A) \circ (\vu^* \vu^{*\T})
\end{array} \right]+  \left[
\begin{array}{cc}
\lambda \mbox{Diag}(\1_{T^c}) &   \mbox{Diag}(\gamma(\vu^* \circ \vv^*)) \\
 \mbox{Diag}(\gamma(\vu^* \circ \vv^*)) & \lambda \mbox{Diag}(\1_{T^c})
\end{array} \right].
\end{equation*}
Denote $\widehat S=\mbox{supp}(\vz^*)$.
Therefore we can rewritten $\nabla^2 g(\vu^*,\vv^*)$ into
\begin{eqnarray}
\nabla^2 g(\vu^*,\vv^*) & = &\underbrace{ \left[
\begin{array}{ll}
(A^\T A) \circ (\vv^* \vv^{*\T})  & (A^\T A) \circ (\vv^* \vu^{*\T})  \\
(A^\T A) \circ (\vu^* \vv^{*\T}) & (A^\T A) \circ (\vu^* \vu^{*\T})
\end{array} \right] }_{:=B_1} \label{eq:firgg2} \\
&& +  \underbrace{ \left[
\begin{array}{cc}
\lambda \mbox{Diag}(\1_{T^c \cap \hS}) &   \mbox{Diag}(\gamma(\vu^* \circ \vv^*)_{\hS}) \\
 \mbox{Diag}(\gamma(\vu^* \circ \vv^*)_{\hS}) & \lambda \mbox{Diag}(\1_{T^c\cap \hS})
\end{array} \right]}_{:=B_2} \notag \\
&&  + \underbrace{ \left[
\begin{array}{cc}
\lambda \mbox{Diag}(\1_{T^c \cap \hS^c}) &   \mbox{Diag}(\gamma(\vu^* \circ \vv^*)_{\hS^c}) \\
 \mbox{Diag}(\gamma(\vu^* \circ \vv^*)_{\hS^c}) & \lambda \mbox{Diag}(\1_{T^c\cap \hS^c})
\end{array} \right]}_{:=B_3} . \notag
\end{eqnarray}

{\bf Bound for $B_1$:}
Observe that $\widehat S \subset \mbox{supp}(\vu^*)$ and $\widehat S \subset  \mbox{supp}(\vv^*)$.
 For any $(\vx_1,\vx_2) \in \R^{n} \times \R^n$,  if $\vx_{1, \hS} \neq \bm 0$ or $\vx_{2,  \hS} \neq \bm 0$ then
\begin{eqnarray*}
(\vx_1,\vx_2) ^\T B_1 (\vx_1,\vx_2) & = &  (\vx_{1}  \circ \vv_{ \hS}^* )^\T A^\T A  (\vx_{1}  \circ \vv_{ \hS}^*)  \\
&&  + 2  (\vx_{1}  \circ \vv_{  \hS}^*)^\T A^\T A  (\vx_2 \circ \vu_{  \hS}^*)  \\
&&+ (\vx_2 \circ \vu_{  \hS}^*)^\T A^\T A  (\vx_2 \circ \vu_{  \hS}^*)\\
&>& 0,
\end{eqnarray*}
where the last inequality comes from the fact that $\mbox{rank}(A)=m$ and  $\norms{\vx_{1}  \circ \vv_{  \hS}^*}_0 \le m$, $\norms{\vx_{2}  \circ \vu_{  \hS}^*}_0 \le m$ by assumption $|\widehat S| \le m$.

{\bf Bound for $B_2$ and $B_3$:}
Since $(\vu^*,\vv^*)$ is a global solution to the program $\mbox{min}_{\vu,\vv} ~ g(\vu,\vv)$ and $\vz^*$ is a  interior point, as shown in the proof of Theorem \ref{th:geo} , it holds:
\begin{subequations}
\begin{gather}
 \gamma(\vu^* \circ \vv^*)_j =0,  \qquad  \mbox{for} \quad j\in T;  \label{eq:kk1}\\
\gamma(\vu^* \circ \vv^*)_j+ \mbox{sign}(u_j^* v_j^*) \cdot \lambda =0,  \quad  \mbox{for} \quad j\in T^c, ~z_j^* \neq 0; \label{eq:kk2}\\
\abs{ \gamma(\vu^* \circ \vv^*)_j}   <  \lambda , \qquad  \mbox{for} \quad j\in T^c, ~z_j^*= 0. \label{eq:kk3}
\end{gather}
\end{subequations}
It gives that  $B_2  \succeq 0 $  and  $B_3 \succeq 0 $. Furthermore,  for any $(\vx_1,\vx_2) \in \R^{n} \times \R^n$,  if $\vx_{1,  \hS^c} \neq \bm 0$ or $\vx_{2,  \hS^c} \neq \bm 0$,  then one has
\begin{eqnarray*}
(\vx_1,\vx_2) ^\T B_3 (\vx_1,\vx_2) & = & \lambda \norm{\vx_{1, T^c \cap \hS^c}}^2 + \lambda \norm{\vx_{2,  T^c \cap \hS^c}}^2 + 2 \vx_{1, \hS^c}^\T \mbox{Diag}(\gamma(\vu^* \circ \vv^*)_{\hS^c})  \vx_{2,  \hS^c} \\
& = & \lambda \norm{\vx_{1,  \hS^c}}^2 + \lambda \norm{\vx_{2,  \hS^c}}^2 + 2 \vx_{1, \hS^c}^\T \mbox{Diag}(\gamma(\vu^* \circ \vv^*)_{\hS^c})  \vx_{2,  \hS^c} \\
&> &  \lambda \norm{\vx_{1,  \hS^c}}^2 + \lambda \norm{\vx_{2,  \hS^c}}^2 -2\lambda \nj{ \abs{\vx_{1,  \hS^c}}, \abs{\vx_{2, \hS^c}} }  \\
&\ge & 0,
\end{eqnarray*}
where the second line follows from the the assumption $T\subset \widehat S$, the third line arises from \eqref{eq:kk3} and the last line comes from Cauchy-Schwarz inequality. Putting the estimators for $B_1,B_2$ and $B_3$ into \eqref{eq:firgg2}, we obtain that for any $\bm 0\neq (\vx_1,\vx_2) \in \R^{n} \times \R^n$, it holds
\[
(\vx_1,\vx_2) ^\T \nabla^2 g(\vu^*,\vv^*) (\vx_1,\vx_2)  >0,
\]
which implies $c_1=\lambda_{\mbox{min}} \xkh{\nabla^2 g(\vu^*,\vv^*)}>0$. This completes the proof.


\end{proof}

We are ready to establish the linear rate of convergence of the Algorithm \ref{al:1}.

\begin{theorem}
For any matrix $A \in \R^{m\times n}$ with $\mbox{rank}(A)=m$. Assume the estimated support $T$ obeying $|T| \le m$. For any initial point $(\vu_0,\vv_0) \in \R^n\times \R^n$,  let $\dkh{(\vu_k,\vv_k)}_k$ be the sequence generated by Algorithm \ref{al:1} with a suitable parameter $\alpha >0$. Then
\begin{itemize}
\item[(i)] The sequence $\dkh{(\vu_k, \vv_k)}_{k=0}^{\infty}$ converges to a critical point $(\vu^*,\vv^*)$ of the function $g(\vu,\vv)$.
\item[(ii)] The critical point $(\vu^*,\vv^*)$ is a global solution of $g(\vu,\vv)$ almost surely. In other words, the set of  $(\vu_0,\vv_0)$ such that Algorithm \ref{al:1} does not converge to a global solution is of zero Lebesgue measure.
\item[(iii)]  Denote $\vz^*=\vu^* \circ \vv^*$. In the case $\mathrm{(ii)}$,  assume in addition that  $T\subset \mbox{supp}(\vz^*)$,  $\norms{\vz^*}_0 \le m$,  and  $-A^\T \xkh{ A\vz^* -  \vy} \in \mathrm{ri}  \xkh{\lambda \norms{\vz_{T^c}^*}_{1}}$. Then there exist $c>0$,  $\tau \in (0,1)$ and an integer $k_0$ such that for all $k\ge k_0$, it holds
\[
\norm{(\vu_k,\vv_k)-(\vu^*,\vv^*)} \le c\tau^{k-k_0}.
\]
\end{itemize}
\end{theorem}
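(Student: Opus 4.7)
The plan is to follow the Attouch--Bolte--Redont--Soubeyran framework for KL-based convergence of proximal alternating methods, using Proposition \ref{pro:kl} for the sharp KL exponent. Three ingredients are needed: (a) sufficient decrease, (b) a relative-error bound $\norm{\nabla g(\vw_{k+1})} \le C\norm{\vw_{k+1}-\vw_k}$, and (c) a KL-type inequality at limit points. Ingredient (a) is already Lemma \ref{le:dest}. Ingredient (b) follows from the first-order optimality conditions of \eqref{al1:1}--\eqref{al1:2}, namely $\nabla_{\vu} g(\vu_{k+1},\vv_k) = -\frac{1}{\alpha}(\vu_{k+1}-\vu_k)$ and the analogous identity for $\vv_{k+1}$, combined with the Lipschitz continuity of $\nabla g$ on bounded sets (which holds because $g$ is polynomial and the iterates stay bounded thanks to Lemma \ref{le:dest} and the coercivity of each subproblem).

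For part (i), the function $g$ is real-analytic, so by \L{}ojasiewicz's theorem the KL property holds at every critical point with some exponent $\theta\in[1/2,1)$. Plugging (a)--(c) into the abstract convergence theorem of Attouch--Bolte--Svaiter yields $\sum_{k\ge 0}\norm{\vw_{k+1}-\vw_k}<\infty$, so $\{\vw_k\}$ is Cauchy and converges to a critical point $(\vu^*,\vv^*)$ of $g$.

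For part (ii), Theorem \ref{th:geo} guarantees that every non-global critical point is a strict saddle of $g$ possessing a direction of strictly negative curvature. I would apply the stable manifold theorem to the iteration map $\Phi_\alpha:(\vu_k,\vv_k)\mapsto(\vu_{k+1},\vv_{k+1})$ defined by \eqref{updateu}--\eqref{updatev}: because each subproblem is strongly convex with an explicit linear-system solution, $\Phi_\alpha$ is a $C^1$ self-map of $\R^n\times \R^n$. At any strict saddle $\vw_\star$, a Schur-complement analysis of $D\Phi_\alpha(\vw_\star)$ based on the block structure of $\nabla^2 g$ in \eqref{eq:Hessian} shows that, for $\alpha$ sufficiently small, $\Phi_\alpha$ is a local diffeomorphism whose Jacobian has an eigenvalue of modulus strictly greater than one along the negative-curvature direction. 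Following \cite{lee2019}, the local stable set at $\vw_\star$ is then a $C^1$ submanifold of positive codimension; summing over the semi-algebraic critical set of $g$ shows the bad initializations form a Lebesgue-null set, and outside it the limit in (i) is a global minimizer.

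For part (iii), the additional hypotheses $T\subset\supp(\vz^*)$, $\norms{\vz^*}_0\le m$, and the strict complementarity condition $-A^\T(A\vz^*-\vy)\in\mathrm{ri}(\lambda\norms{\vz^*_{T^c}}_1)$ activate Proposition \ref{pro:kl}, upgrading the KL exponent at $(\vu^*,\vv^*)$ to $\theta=1/2$. Combining this sharp exponent with (a) and (b) in the standard Attouch--Bolte linear-rate lemma gives a geometric contraction $\norm{\vw_{k+1}-\vw_k}\le\rho\norm{\vw_k-\vw_{k-1}}$ for some $\rho\in(0,1)$, once $\vw_k$ enters the KL neighborhood $U(\vu^*,\vv^*)$ at some index $k_0$; summing the geometric tail yields $\norm{\vw_k-\vw^*}\le c\tau^{k-k_0}$ for all $k\ge k_0$. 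The main obstacle is part (ii): the stable-manifold argument requires a delicate spectral computation showing that the negative-curvature direction of $\nabla^2 g$ at each strict saddle lifts to an expanding direction of the composed alternating-plus-proximal map $\Phi_\alpha$, which, unlike the gradient-descent setting of \cite{lee2019}, is not a direct perturbation of the identity but a solution operator of two coupled regularized subproblems.
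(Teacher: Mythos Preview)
Your plan for parts (i) and (iii) is essentially the paper's: Lemma~\ref{le:dest} gives sufficient decrease, the optimality conditions of \eqref{al1:1}--\eqref{al1:2} give the relative-error bound $\norm{\nabla g(\vw_{k+1})}\le C\norm{\vw_{k+1}-\vw_k}$ once the iterates are confined to a bounded level set, and the KL property of the polynomial $g$ drives the finite-length argument; Proposition~\ref{pro:kl} then upgrades the exponent to $1/2$ and the standard geometric-tail computation yields the linear rate.

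The genuine divergence is in part (ii). You propose to redo the stable-manifold analysis of \cite{lee2019} directly on the composed map $\Phi_\alpha$, and you rightly identify this as the delicate step: $\Phi_\alpha$ is the solution operator of two coupled regularized linear systems, not a perturbation of the identity. The paper sidesteps this entirely by invoking \cite[Theorem~10]{li2019alternating}, a result tailored to (proximal) alternating minimization which already establishes almost-sure avoidance of strict saddles under a bi-smoothness hypothesis $\max\{\|\nabla_{\vu}^2 g\|,\|\nabla_{\vv}^2 g\|\}\le L_g$ on the level set. The paper's remaining work in (ii) is therefore just to verify this bound (immediate from the explicit Hessian blocks and boundedness of the level set) and to feed in the landscape description of Theorem~\ref{th:geo}. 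Your route would reprove a nontrivial chunk of \cite{li2019alternating}.

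One concrete concern with your sketch: you ask for $\alpha$ \emph{small} to produce an expanding eigenvalue of $D\Phi_\alpha$, but as $\alpha\to 0$ the proximal term $\tfrac{1}{2\alpha}\|\cdot\|^2$ dominates and $\Phi_\alpha\to I$, so $D\Phi_\alpha\to I$ and all eigenvalues collapse to $1$. The paper's condition goes the other way: it requires $\alpha>L_g$, i.e.\ the proximal regularization must be \emph{weak} enough that the alternating-minimization dynamics (and hence the negative-curvature escape) dominate. If you pursue the direct Schur-complement route you will need to work in this large-$\alpha$ regime, not the small-$\alpha$ one.
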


\begin{proof}
We first prove (i).  For convenience, we denote $\vw_k=(\vu_k,\vv_k)$ for all $k\ge 0$. By lemma \ref{le:dest}, we have
\begin{equation} \label{eq:dew}
 \frac{1}{2\alpha} \norm{\vw_{k+1}-\vw_k}^2 \le g(\vw_k)- g(\vw_{k+1}) .
\end{equation}
That is, $g(\vw_k) \ge 0$ is monotonically nonincreasing sequence, and its limit exists. Define the level set
\[
\mathcal{R}:=\dkh{\vw\in \R^n\times \R^n: g(\vw) \le g(\vw_0)}.
\]
Under the condition that  $\mbox{rank}(A)=m$ and $|T| \le m$, it is easy to verify that $g(\vw)$ is coercive, i.e., $g(\vw) \to \infty$ as $\norm{\vw} \to \infty$. Therefore, $\dkh{(\vu_k, \vv_k)}_k \subset \mathcal{R}$ is a bounded sequence, which means the set of its limit points is non-empty.
By the definition of $\vu_{k+1}$ and $\vv_{k+1}$, it follows from the optimality conditions that
\begin{equation} \label{eq:opc1}
\nabla_{\vu} g(\vu_{k+1}, \vv_k) + \frac1{\alpha} (\vu_{k+1}-\vu_k) =0
\end{equation}
and
\begin{equation} \label{eq:opc2}
\nabla_{\vv} g(\vu_{k+1}, \vv_{k+1}) + \frac1{\alpha} (\vv_{k+1}-\vv_k) =0.
\end{equation}
Summing \eqref{eq:dew} over $k$ from $0$ to $+\infty$,  and combining \eqref{eq:opc1} and \eqref{eq:opc2},
we obtain
\begin{eqnarray*}
\frac{\alpha}2  \sum_{k=0}^{+\infty} \xkh{\norm{\nabla_{\vu} g(\vu_{k+1}, \vv_k)}^2+\norm{ \nabla_{\vv} g(\vu_{k+1}, \vv_{k+1}) }^2 }&=& \frac{1}{2\alpha} \sum_{k=0}^{+\infty} \norm{\vw_{k+1}-\vw_k}^2\\
& \le & g(\vw_k)- \lim\limits_{k\to +\infty}g(\vw_{k}).
\end{eqnarray*}
This means
\[
 \lim\limits_{k\to +\infty} \norm{\vu_{k+1}-\vu_k}=\alpha \cdot \lim\limits_{k\to +\infty} \norm{\nabla_{\vu} g(\vu_{k+1}, \vv_k)}=0
\]
and
\[
 \lim\limits_{k\to +\infty} \norm{\vv_{k+1}-\vv_k}=\alpha \cdot \lim\limits_{k\to +\infty} \norm{\nabla_{\vv} g(\vu_{k+1}, \vv_{k+1})}=0.
\]
By the continuity of $\nabla_{\vu} g$ and $\nabla_{\vv} g$, we obtain that any clustering point of $\dkh{(\vu_k, \vv_k)}_k$ is a critical point of $g(\vu,\vv)$.

It remains to prove that $\dkh{(\vu_k, \vv_k)}_k$ is convergent, which can be checked by showing that $\dkh{(\vu_k, \vv_k)}_k$ is a Cauchy sequence. To this end,  for any clustering point $(\vu^*,\vv^*)$ of $\dkh{(\vu_k, \vv_k)}_k$,  we denote $\vw^*:=(\vu^*,\vv^*)$.
 Since $g(\vu,\vv)$ is a real-valued polynomial function, it then satisfies KL property.  From Definition \ref{def:KL}, there exist a differentiable concave function $\psi(t): \R_+ \to \R_+$ and a constant $\delta>0$ such that
\begin{equation} \label{eq:klw}
\psi'\xkh{g(\vw)-g(\vw^*)} \norm{\nabla g(\vw)} \ge 1 \qquad \mbox{for all} \quad  \vw \in B(\vw^*, \delta).
\end{equation}
Here, $B(\vw^*,\delta)$ is a neighborhood of $\vw^*$ with radius $\delta$. Note that $\vw^*$ is a clustering point of $\dkh{\vw_k}_{k\ge 1} $.  There is an integer $k_0$ such that $\vw_{k_0} \in B(\vw^*, \delta/2)$.  To show \eqref{eq:klw} holds for all $\vw_k$ with $k\ge k_0$, we next prove by induction that $\vw_k \in B(\vw^*,\delta)$ for all $k\ge k_0$.
Assume that $\vw_k \in B(\vw^*,\delta)$ holds up to some $k\ge k_0$.   Recall that $\psi(t)$ is a concave function, it then comes from \eqref{eq:klw} that
\begin{eqnarray}
&& \psi\xkh{g(\vw_k)-g(\vw^*) }- \psi\xkh{g(\vw_{k+1})-g(\vw^*) }  \notag \\
&\ge & \psi'\xkh{g(\vw_k)-g(\vw^*)} \cdot \xkh{g(\vw_k)- g(\vw_{k+1})}  \notag \\
&\ge & \frac{\norm{\vw_{k+1}-\vw_k}^2}{2\alpha\norm{\nabla g(\vw_k)}}, \label{eq:psminu}
\end{eqnarray}
where we use \eqref{eq:dew} in the last inequality. For the term $\norm{\nabla g(\vw_k)}$, by \eqref{eq:opc2}, we have
\begin{equation} \label{eq:grgu1}
\norm{\nabla_{\vv} g(\vu_{k}, \vv_{k}) } = \frac1{\alpha} \norm{\vv_{k}-\vv_{k-1}}.
\end{equation}
Furthermore,  using \eqref{eq:opc1}, we obtain
\begin{eqnarray}
\norm{\nabla_{\vu} g(\vu_{k}, \vv_{k}) } & = &  \norm{\nabla_{\vu} g(\vu_{k}, \vv_{k}) - \nabla_{\vu} g(\vu_{k}, \vv_{k-1}) -\frac1{\alpha} (\vu_{k}-\vu_{k-1})}  \notag \\
&\le &  \norm{\nabla_{\vu} g(\vu_{k}, \vv_{k}) - \nabla_{\vu} g(\vu_{k}, \vv_{k-1}) }+  \frac1{\alpha}  \norm{\vu_{k}-\vu_{k-1}} \notag  \\
&\le & L_g  \norm{\vv_{k}-\vv_{k-1}} + \frac1{\alpha}  \norm{\vu_{k}-\vu_{k-1}} \notag \\
&\le & C_1 \norm{\vw_k-\vw_{k-1}} \label{eq:grgu}
\end{eqnarray}
where $L_g:= \sup_{\vu,\vv \in \mathcal R} \norm{\nabla_{\vu,\vv}^2 g(\vu,\vv)}$ is a finite due to the fact $\nabla_{\vu,\vv}^2 g(\vu,\vv)$ is continuous function and the region $\mathcal R$ is a compact, and $C_1:=2 \max\dkh{L_g, 1/\alpha}$. Putting \eqref{eq:grgu1} and \eqref{eq:grgu} into \eqref{eq:psminu} gives
\[
\psi\xkh{g(\vw_k)-g(\vw^*) }- \psi\xkh{g(\vw_{k+1})-g(\vw^*) }  \ge \frac1{2\alpha C_1}\cdot \frac{\norm{\vw_{k+1}-\vw_k}^2}{ \norm{\vw_k-\vw_{k-1}}},
\]
which means
\begin{eqnarray}
 2 \norm{\vw_{k+1}-\vw_k} & \le &  2 \sqrt{\norm{\vw_k-\vw_{k-1}}} \sqrt{ 2\alpha C_1 \cdot \xkh{\psi\xkh{g(\vw_k)-g(\vw^*) }- \psi\xkh{g(\vw_{k+1})-g(\vw^*) } } } \notag \\
 &\le & \norm{\vw_k-\vw_{k-1}} + C \cdot \xkh{\psi\xkh{g(\vw_k)-g(\vw^*) }- \psi\xkh{g(\vw_{k+1})-g(\vw^*) } } . \label{eq:wkpwst}
\end{eqnarray}
Here, $C=2\alpha C_1>0$ is a constant. Summing  it up to $k$, we get
\begin{equation} \label{eq:S_0}
 \sum_{j=k_0}^k \norm{\vw_{j+1}-\vw_j} \le \norm{\vw_{k_0}-\vw_{k_0-1}} +C \psi\xkh{g(\vw_{k_0})-g(\vw^*) }.
\end{equation}
Therefore, it holds
\begin{eqnarray*}
\norm{\vw_{k+1}-\vw^*} & \le &  \sum_{j=k_0}^k \norm{\vw_{j+1}-\vw_j} +\norm{\vw_1-\vw^*}  \\
&\le &\norm{\vw_{k_0}-\vw_{k_0-1}} +C \psi\xkh{g(\vw_{k_0})-g(\vw^*) }+\norm{\vw_{k_0}-\vw^*} \\
&\le &2\alpha \xkh{g(\vw_{k_0})-g(\vw_{k_0-1})}+ C \psi\xkh{g(\vw_{k_0})-g(\vw^*) }+\norm{\vw_{k_0}-\vw^*} \\
& \le &  \delta,
\end{eqnarray*}
where the third inequality comes from \eqref{eq:dew}, and the  the last inequality comes from the fact that $\norm{\vw_{k_0}-\vw^*}  \le \delta/2$, and $\lim_{k\to \infty} g(\vw_k) =g(\vw^*)$. Hence, $\vw_{k+1} \in B(\vw^*,\delta)$.
This completes the proof that $\vw_{k} \in B(\vw^*,\delta)$ for all $k\ge k_0$. With this result in place, the inequality \eqref{eq:wkpwst} holds for all $k\ge k_0$. Summing it over $k$ from $k_0$ to $+\infty$, we get
\[
 \sum_{j=k_0}^{+\infty} \norm{\vw_{j+1}-\vw_j} \le \norm{\vw_{k_0}-\vw_{k_0-1}} +C \psi\xkh{g(\vw_{k_0})-g(\vw^*) }<+\infty.
\]
This verifies that $\dkh{(\vu_k, \vv_k)}_k$ is a Cauchy sequence, and therefore it is convergent.

We next turn to prove  (ii). As shown in \cite[Theorem 10]{li2019alternating}, if $g(\vu,\vv)$ is $L_g$ bi-smooth in the level set $\mathcal{R}:=\dkh{\vw\in \R^n\times \R^n: g(\vw) \le g(\vw_0)} $, i.e.,
 $\max\dkh{ \norm{\nabla_{\vu}^2 g(\vu,\vv)},  \norm{\nabla_{\vv}^2 g(\vu,\vv)} } \le L_g$ in $\mathcal R$, and the parameter $\alpha> L_g$, then Algorithm \ref{al:1} with random initialization will not converge to a strict saddle point of $g$ almost surely.  Observe that
\[
\nabla_{\vu}^2 g(\vu,\vv)=(A^\T A) \circ (\vv\vv^\T) +\lambda \mbox{Diag}(\1_{T^c}).
\]
Therefore, for any $\vw=(\vu,\vv) \in \mathcal R$, it holds
\[
\norm{\nabla_{\vu}^2 g(\vu,\vv)} \le \norm{A}^2 \norm{\vv}^2 + \lambda \le L_g
\]
 for some constant $L_g>0$. Here, the last inequality comes from the fact that $R$ is a bounded region such that $\norm{\vv} \le C'$ for some constant $C'>0$. Similarly, one can show that $\norm{\nabla_{\vv}^2 g(\vu,\vv)}  \le L_g$.  Therefore, the critical point $(\vu^*,\vv^*)$ generated by Algorithm \ref{al:1} is not a strict saddle point almost surely by \cite[Theorem 10]{li2019alternating}. Finally, due to the benign geometric landscape of $g(\vu,\vv)$ given in Theorem \ref{th:geo}, namely, all local minimizers are global minimizers and each saddle point are strict,  we arrive at the conclusion that $(\vu^*,\vv^*)$ is a global solution of $g(\vu,\vv)$ almost surely.

Finally, let us prove (iii).  In the case of (ii), $(\vu^*,\vv^*)$ is a global solution of $g(\vu,\vv)$ almost surely.  Applying  Proposition \ref{pro:kl}, the function $g(\vu,\vv)$ satisfies KL property at $(\vu^*,\vv^*)$ with a differentiable concave function $\psi(t)=c t^{1/2}$. Here, $c>0$ is a constant. Therefore, it comes from \eqref{eq:wkpwst} that
\[
 2 \norm{\vw_{k+1}-\vw_k} \le \norm{\vw_k-\vw_{k-1}} + C_2 \cdot \xkh{\sqrt{g(\vw_k)-g(\vw^*) }- \sqrt{g(\vw_{k+1})-g(\vw^*) } },
\]
where $C_2>0$ is a constant. For any $k\ge k_0$, let $S_k=\sum_{j=k}^{\infty} \norm{\vw_{k+1}-\vw_k}$.  Summing the above inequality over $k$, we get
\begin{equation} \label{eq:cc1}
\sum_{j=k}^\infty \norm{\vw_{j+1}-\vw_j} \le \norm{\vw_{k}-\vw_{k-1}} +C_2 \sqrt{g(\vw_{k})-g(\vw^*) }.
\end{equation}
Observing that $\vw_k$ satisfies the KL inequality \eqref{eq:klw}, it yields
\begin{equation} \label{eq:cc2}
\norm{g(\vw_k)} \ge \frac{1}{\psi'(g(\vw_k)-g(\vw^*))} =\frac{2}{c} \sqrt{g(\vw_{k})-g(\vw^*) },
\end{equation}
where the last inequality due to the fact $\psi'(t)=c/(2\sqrt t)$. Moreover, as shown in \eqref{eq:grgu1} and \eqref{eq:grgu}, one has
\begin{equation} \label{eq:cc3}
\norm{g(\vw_k)} \le 2C_1 \norm{\vw_k-\vw_{k-1}}.
\end{equation}
Combining \eqref{eq:cc1}, \eqref{eq:cc2} and \eqref{eq:cc3} together, and using the definition of $S_k$, we obtain
\[
S_k \le C_3 \xkh{S_{k-1} -S_k},
\]
where $C_3:=c\cdot C_1\cdot C_2>0$ is a constant. This gives
\[
S_k=\frac{C_3}{C_3+1} S_{k-1} \le \cdots \le \tau^{k-k_0} S_{k_0}
\]
for $\tau=c_3/(C_3+1)$. Since $\norm{\vw_k-\vw^*} \le S_k$ and $S_{k_0}$ is finite number by \eqref{eq:S_0}, we completes the proof.
\end{proof}

\section{Numerical experiments}
In this section, we present several numerical experiments to validate the effectiveness and robustness of the PHPP algorithm (Algorithm \ref{al:2})  in comparison with  HPP\cite{hoff2017}, Tail-HPP \cite{Guangxi}, OMP\cite{zhang2011sparse}, CoSaMP\cite{needell2009cosamp}, SP \cite{Dai}, and HTP\cite{foucart2011hard}. These methods are selected due to their widespread applications and high efficiency in solving compressed sensing.    The implementations for OMP, CoSaMP, SP, and HTP were obtained from Matlab central file exchange \footnote{OMP, CoSaMP, SP are available at: https://www.mathworks.com/matlabcentral/fileexchange/160128-omp-sp-cosamp. HTP is available at:  https://github.com/foucart/HTP.}. All experiments are conducted on a laptop with a 2.4 GHz Intel Core i7 Processor, 8 GB 2133 MHz LPDDR3 memory, and Matlab R2024a.

Throughout the numerical experiments, the target vector  $\vxs \in \R^n$ is set to be $k$-sparse, with its support uniformly drawn at random from all $k$-subsets of $\dkh{1,2,\ldots,n}$. The nonzero entries of  $\vx$ are randomly generated from the standard Gaussian distribution.  The measurement matrix $A$ is selected according to one of the following configurations, as described in \cite{Yin}.
\begin{itemize}
\item[1.] {\bf Gaussian matrix:} Let $A\in \R^{m\times n}$ be a random Gaussian matrix, where each column is independently generated from the standard normal distribution. The columns are then normalized to have unit length.
\item[2.] {\bf Partial DCT matrix:} Let $A\in \R^{m\times n}$ be a random partial discrete cosine transform matrix generated by
\[
A_{i,j}=\cos\xkh{2\pi(j-1)\psi_i}, \quad i=1,\ldots,m,\quad j=1,\ldots,n,
\]
where $\psi_i, i=1,\ldots,m$ are independently and uniformly sampled from $[0,1]$.  The columns of $A$ are then normalized to have unit length.
\end{itemize}

\begin{example}
In this example, we evaluate the empirical successful rate of PHPP in comparison with state-of-the-art methods in the absence of noise. We first conduct $500$ independent trials with fixed $n=256$ and $m=64$,  recording the successful rates for sparsity levels $k$ ranging from $1$ to $34$. A trial is considered successful if the algorithm returns a vector $\vz_T$  satisfying  $\norm{\vz_{T}-\vxs}\le 10^{-4}$.  The parameters of PHPP, as specified in Algorithm \ref{al:2}, are set to $\lambda=0.1/m$, $\alpha=10/\lambda$, and $\tau=10^{-5}$.  The results are presented in Figure \ref{figure:suc_fixm}, where it can be observed that PHPP achieves the highest success rate for each sparsity level $k$.  Next, we perform $500$ independent trials with fixed $n=256$ and  $k=12$, but  varying $m$ within the range $[0.1n,0.3n]$.  The results, shown in Figure \ref{figure:suc_fixk}, clearly demonstrate that as $m$   increases, the problem becomes easier to solve. Once again, PHPP outperforms the other methods, achieving the highest success rate across all sparsity $k$.
\end{example}

\begin{figure}[H]
\centering
\subfigure[]{
     \includegraphics[width=0.45\textwidth]{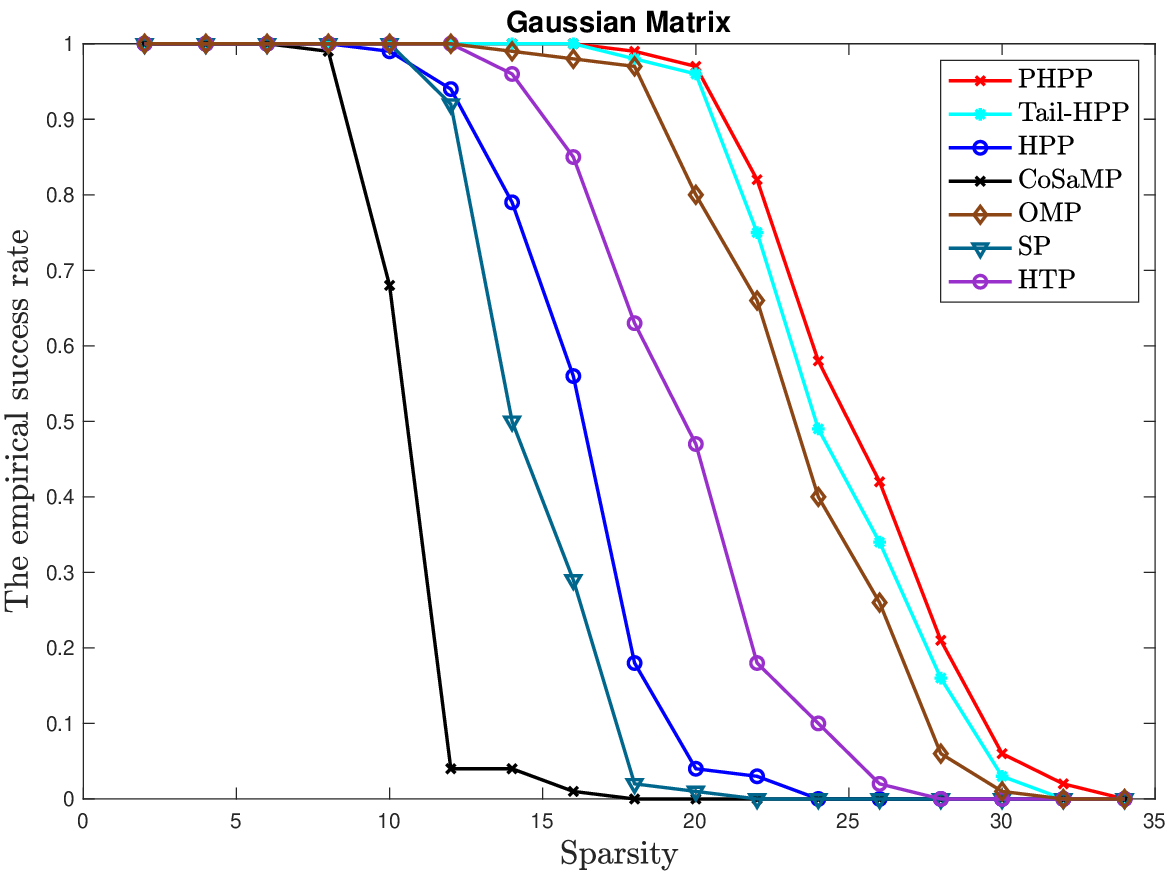}}
\subfigure[]{
     \includegraphics[width=0.45\textwidth]{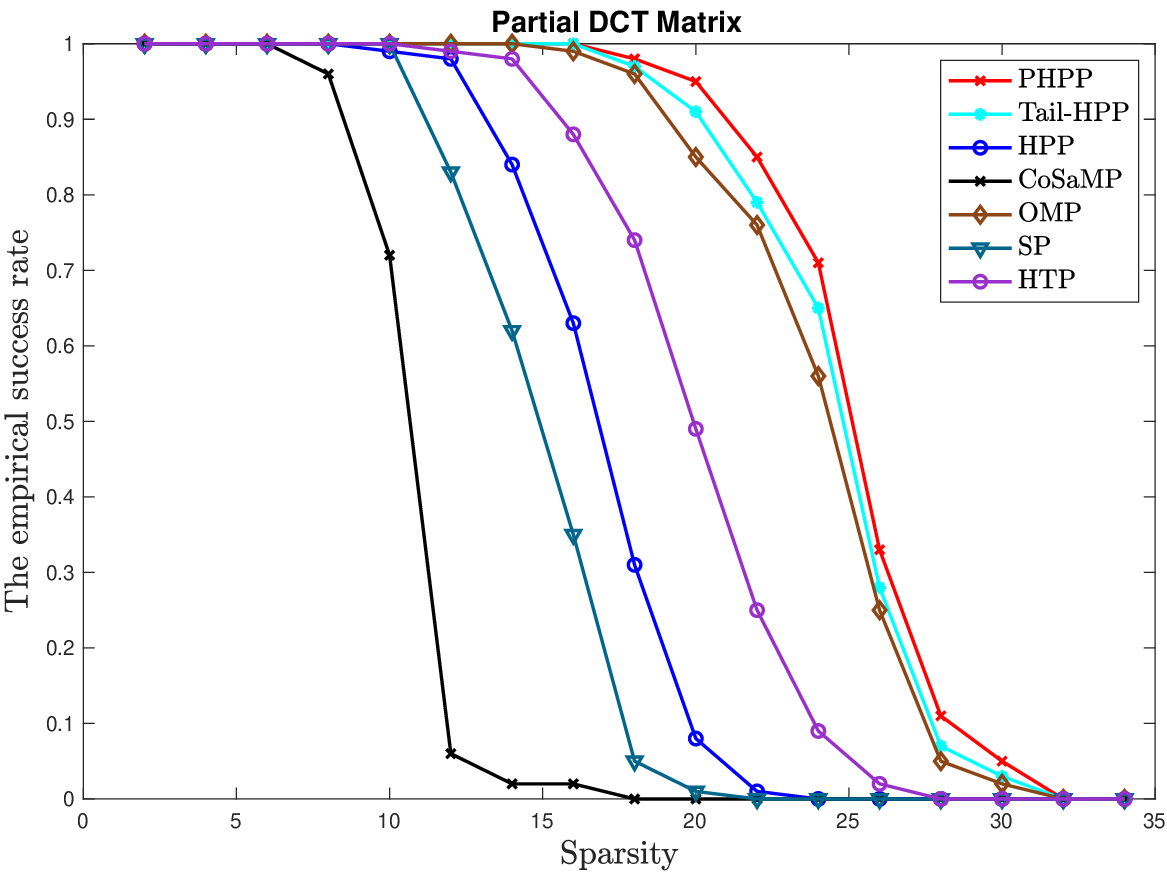}}
\caption{Success rates. $n=256, m=64$ and sparsity $k\in \dkh{2,4,\cdots,34}$.}
\label{figure:suc_fixm}
\end{figure}

\begin{figure}[H]
\centering
\subfigure[]{
     \includegraphics[width=0.45\textwidth]{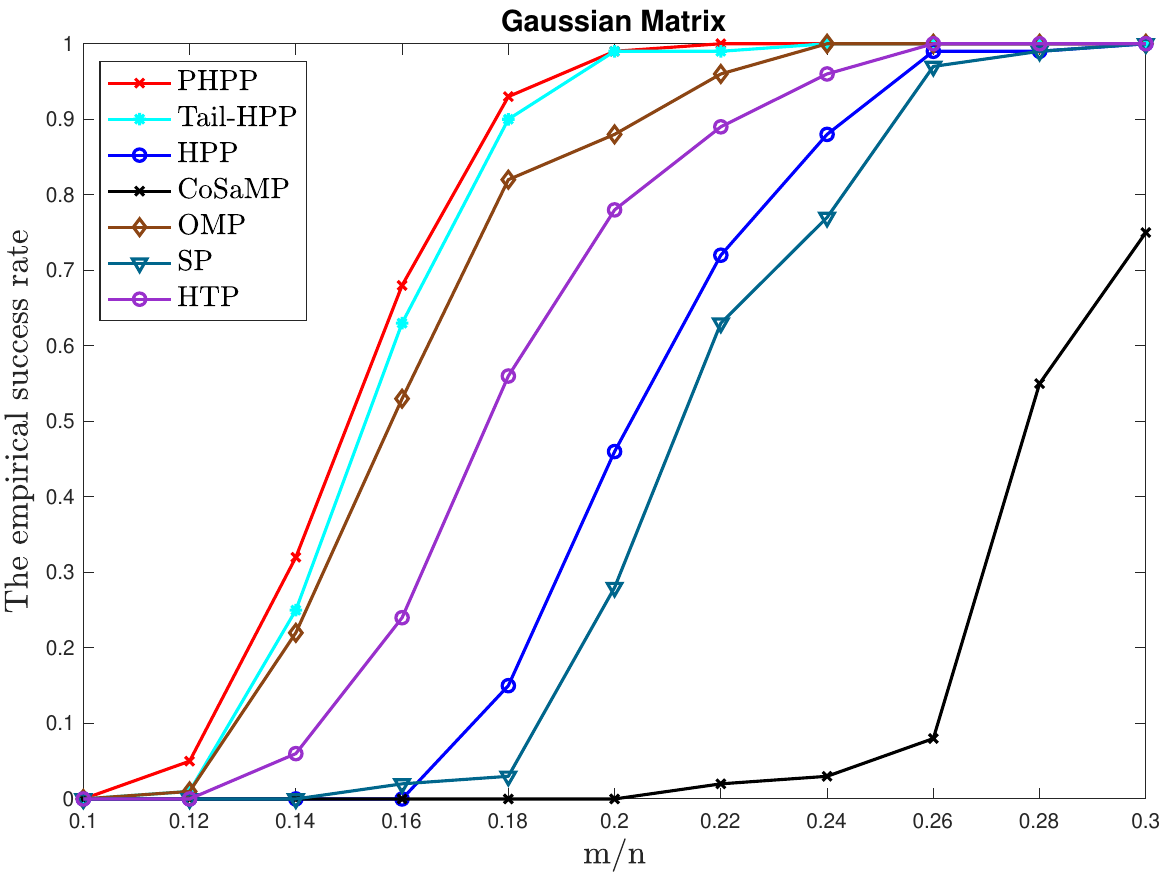}}
\subfigure[]{
     \includegraphics[width=0.45\textwidth]{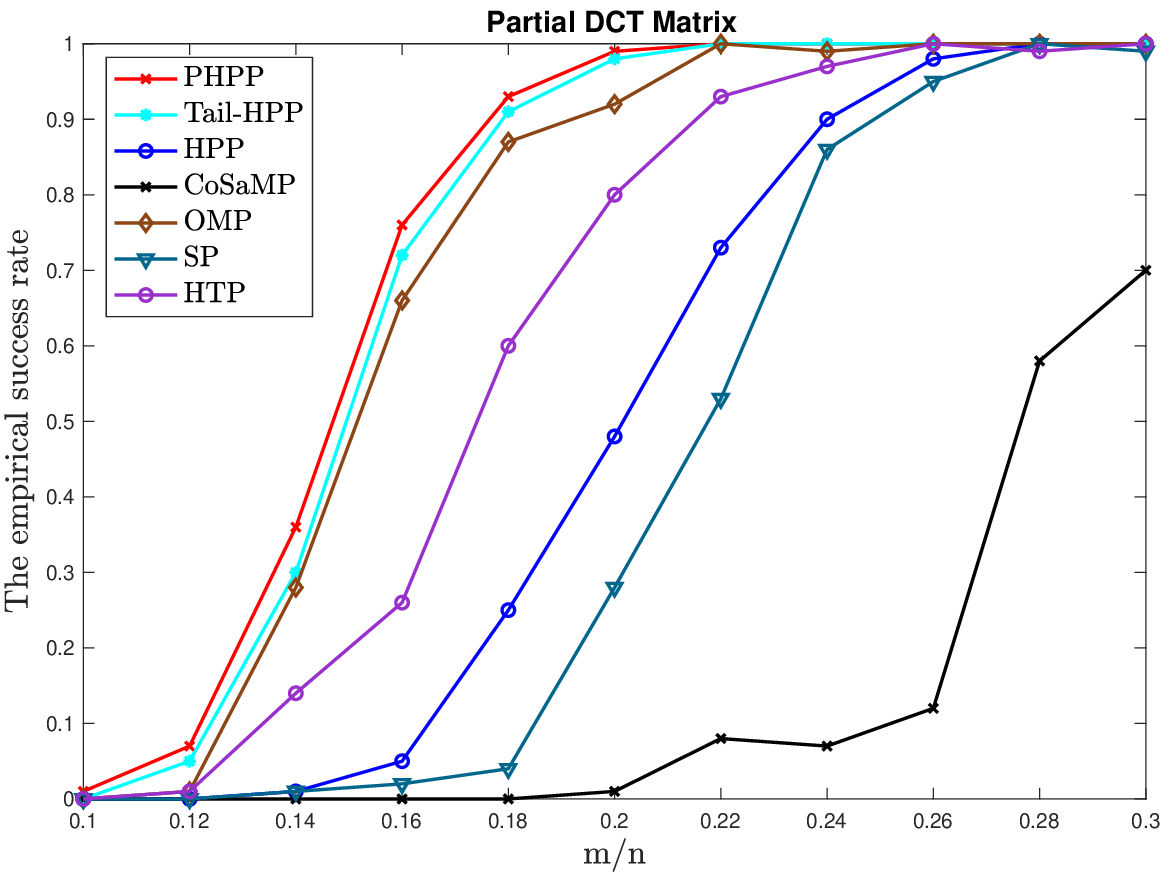}}
\caption{Success rates. $n=256, k=12$ and $m=0.1n, 0.12n, \ldots, 0.3n$.}
\label{figure:suc_fixk}
\end{figure}

\begin{example}
In this example, we compare the empirical successful rate of PHPP with those of state-of-the-art methods under noisy measurements $y=A\vx+\vw$, where $\vw \in \R^m$ is the noise vector.  Two types of noise distributions are considered. The first is Gaussian noise, where $\vw \sim \sigma\cdot \mathcal{N}(0, I_m)/\sqrt{m}$ with $\sigma=0.01$.  The second is heavy-tailed noise, where each entry of $\vw$ is generated according to Student's $t$-distribution with parameter $\mu=5$, specifically $w_j \sim  \sigma \cdot t_5/\sqrt{m}$ for all $j=1,\ldots,m$.   We run $500$ independent trials with fixed $n=256, m=64$ and record the success rates for sparsity levels $k$ varying from $1$ to $34$. In the noisy case, a trial is considered successful if the algorithm returns a vector $\vz_T$  such that  $\norm{\vz_{T}-\vxs}\le 0.01$.  The parameters for PHPP are set as $\lambda=\sigma\sqrt{\frac{\log n}{m}}$, $\alpha=10/\lambda$, and $\tau=10^{-5}$. The results are shown in Figure \ref{figure:suc_nois}.  It can be observed that  PHPP yields the highest success rate for each sparsity level $k$.
\end{example}

\begin{figure}[H]
\centering
\subfigure[]{
     \includegraphics[width=0.45\textwidth]{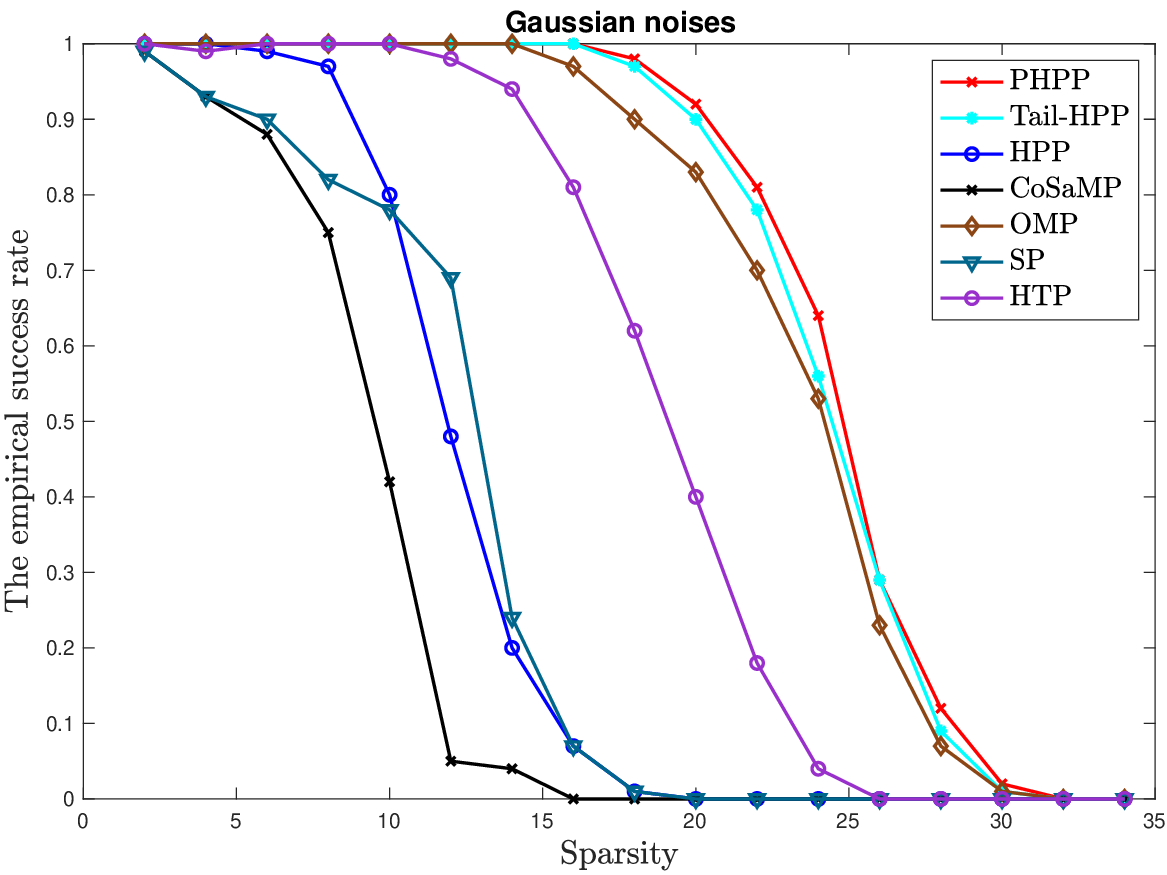}}
\subfigure[]{
     \includegraphics[width=0.45\textwidth]{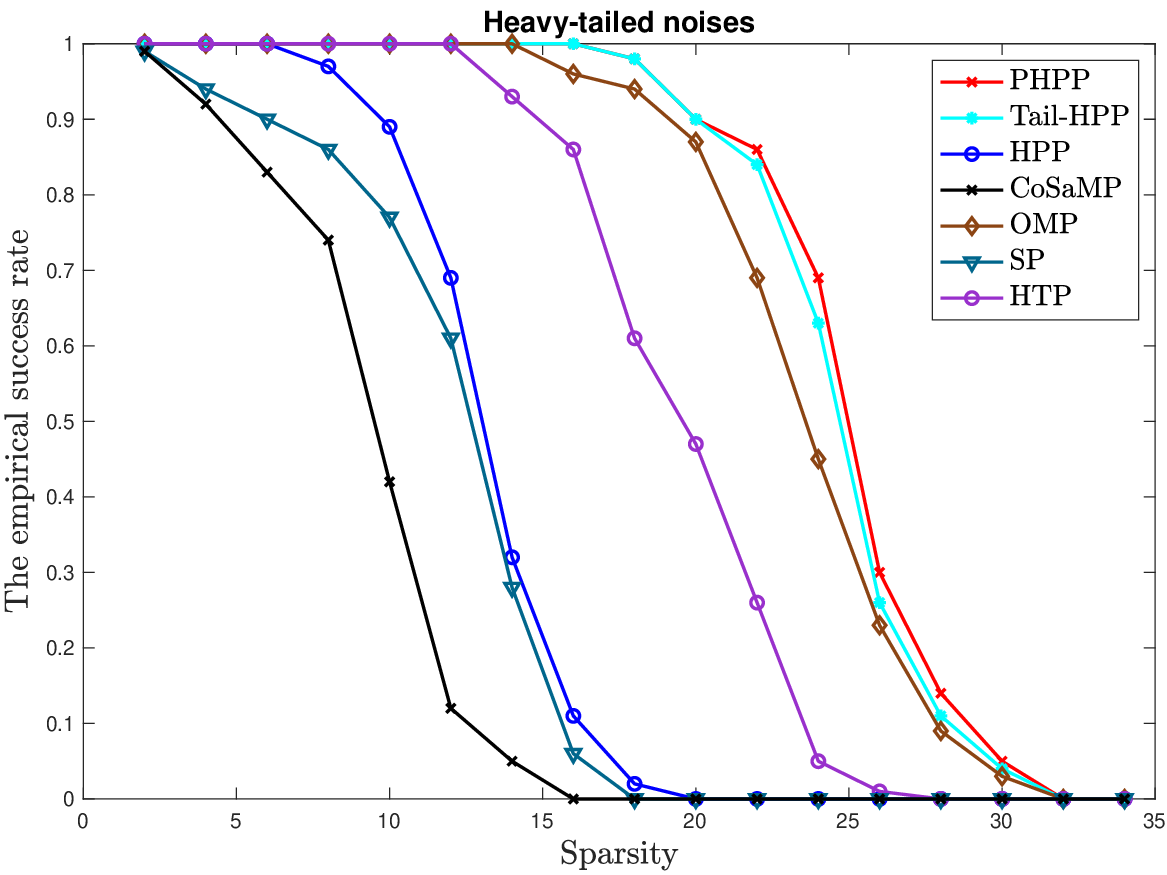}}
\caption{Success rates for noisy measurements. $n=256, m=64$ and sparsity $k\in \dkh{2,4,\cdots,34}$.}
\label{figure:suc_nois}
\end{figure}

\begin{table}[tp]
  \centering
  \fontsize{13}{16}\selectfont
  \caption{Average CPU time (in seconds) for PHPP with those of state-of-the-art methods.}
  \label{tab:computime}
    \begin{tabular}{|c|c|c|c|c|c|c|c|}
    \hline
     $n$ & \mbox{PHPP} &\mbox{Tail-HPP} & \mbox{HPP} & \mbox{CoSaMP} & \mbox{OMP}& \mbox{SP} & \mbox{HTP} \cr\hline
    1000 & 0.0201 & 0.0531 & 0.0732  & 0.0028 & 0.0043& 0.0016 & 0.0047  \cr \hline
    2000 &0.0937 & 0.2639 & 0.3473  & 0.0108 & 0.0087& 0.0049 & 0.0252  \cr \hline
    3000 & 0.2338 & 0.7429 & 1.0325  & 0.0278 & 0.0316 & 0.0109 & 0.0665 \cr \hline
    \end{tabular}
\end{table}

\begin{example}
In this example, we evaluate the computational time of these seven methods.  We run $50$ trials with higher dimensions $n=1000,2000,3000$, while keeping $m=n/4$ and the sparsity level $k=0.01n$.   The parameters for PHPP are set to $\lambda=0.1/m$, $\alpha=10/\lambda$, and $\tau=0.01$.  Table \ref{tab:computime} lists the averaged CPU time required to achieve an error of  $10^{-4}$. For our algorithm PHPP, during the first few iterations (approximately 4 iterations), the cardinality of $S_k$ and $\widetilde{S_k}$  in Algorithm \ref{al:2} are relatively large, making the computations in \eqref{eq:upusk} and \eqref{eq:upuskv} more expensive.

As shown in Table \ref{tab:computime}, the PHPP method is slower in overall runtime compared to state-of-the-art methods such as CoSaMP, OMP, SP, and HTP, as it involves iterations over estimated support $T_k$.  It is clear, however, PHPP achieves much higher successful rate of signal recovery. 
\end{example}

\section{Discussions}
This article addresses RIP analyses of the tail minimization problem for compressed sensing.  Recovery error bounds for both tail-$\ell_1$ minimization and tail-lasso problems under RIP conditions are established. Additionally, an improved version of the HPP method - proximal alternating minimization using Hadamard product parametrization - is proposed for the effective solution of the tail-lasso problem. The linear rate of convergence of the proposed algorithm is established using the Kurdyka-{\L}ojasiewicz  inequality.  Numerical results demonstrate that the proposed algorithm significantly enhances signal recovery performance compared to state-of-the-art techniques.

Several intriguing problems remain for future studies. First, while we have demonstrated the linear rate of convergence of the PHPP method for a given estimated support $T$, proving the convergence for adaptively estimated support $T$ would be an interesting challenge. Second, although the PHPP method achieves the highest success rate compared to state-of-the-art methods, continuing to reduce  the computational cost would be of significant practical interest.

\end{document}